\documentclass[english,11pt,letterpaper]{article}
\usepackage[T1]{fontenc}
\usepackage[latin9]{inputenc}
\usepackage{geometry}
\geometry{verbose,lmargin=1in,rmargin=1in}
\usepackage{amsthm}
\usepackage{amsmath}
\usepackage{amssymb}
\usepackage{url}
\usepackage{hhline}
\usepackage{tablefootnote}
\usepackage{graphicx}
\usepackage{subcaption}

\makeatletter
\numberwithin{equation}{section}
\numberwithin{figure}{section}
\theoremstyle{plain}
\newtheorem{thm}{\protect\theoremname}
  \theoremstyle{definition}
  \newtheorem{defn}[thm]{\protect\definitionname}
  \theoremstyle{plain}
  \newtheorem{lem}[thm]{\protect\lemmaname}
  \theoremstyle{definition}
  
  \theoremstyle{remark}
  \newtheorem{rem}[thm]{\protect\remarkname}
  \theoremstyle{remark}
  \newtheorem*{rem*}{\protect\remarkname}

\@ifundefined{date}{}{\date{}}
\usepackage{tikz} 

 \usepackage[ruled,linesnumbered]{algorithm2e}

%
%
%


\usepackage{xy}
\xyoption{matrix}
\xyoption{frame}
\xyoption{arrow}
\xyoption{arc}

\usepackage{ifpdf}
\ifpdf
\else
\PackageWarningNoLine{Qcircuit}{Qcircuit is loading in Postscript mode.  The Xy-pic options ps and dvips will be loaded.  If you wish to use other Postscript drivers for Xy-pic, you must modify the code in Qcircuit.tex}
\xyoption{ps}
\xyoption{dvips}
\fi

\entrymodifiers={!C\entrybox}

\usepackage{graphicx,amsmath, amsthm, amssymb,color,url,booktabs,comment}  
\urlstyle{sf}
\usepackage[colorlinks]{hyperref} 

\newcommand{\nc}{\newcommand}
\nc{\rnc}{\renewcommand}

\nc{\vev}[1]{\langle#1\rangle}
\nc{\grad}{{\vec{\nabla}}}

\DeclareMathOperator{\poly}{poly}
\DeclareMathOperator{\polylog}{polylog}
\DeclareMathOperator{\tr}{tr}
\DeclareMathOperator{\rank}{rank}

\newcommand{\be}{\begin{equation}}
\newcommand{\ee}{\end{equation}}
\newcommand{\bea}{\begin{eqnarray}}
\newcommand{\eea}{\end{eqnarray}}
\newcommand{\nn}{\nonumber}
\newcommand{\bi}{\begin{itemize}}
\newcommand{\ei}{\end{itemize}}
\newcommand{\bn}{\begin{enumerate}}
\newcommand{\en}{\end{enumerate}}
\def\beas#1\eeas{\begin{eqnarray*}#1\end{eqnarray*}}
\def\ba#1\ea{\begin{align}#1\end{align}}
\nc{\bas}{\[\begin{aligned}}
\nc{\eas}{\end{aligned}\]}
\nc{\bpm}{\begin{pmatrix}}
\nc{\epm}{\end{pmatrix}}

\def\nn{\nonumber}

\def\L{\left} 
\def\R{\right}

\newtheorem{cor}[thm]{Corollary}


\makeatletter
\newtheorem*{rep@theorem}{\rep@title}
\newcommand{\newreptheorem}[2]{%
\newenvironment{rep#1}[1]{%
 \def\rep@title{#2 \ref{##1}}%
 \begin{rep@theorem}}%
 {\end{rep@theorem}}}
\makeatother

\newreptheorem{thm}{Theorem}
\newreptheorem{lem}{Lemma}

\def\eps{\epsilon}

\DeclareMathOperator*{\E}{\mathbb{E}}

\def\benum{\begin{enumerate}}
\def\eenum{\end{enumerate}}
\def\bit{\begin{itemize}}
\def\eit{\end{itemize}}
\def\bdesc{\begin{description}}
\def\edesc{\end{description}}

\nc{\todo}[1]{\textcolor{red}{todo: #1}}

\def\begsub#1#2\endsub{\begin{subequations}\label{eq:#1}\begin{align}#2\end{align}\end{subequations}}
\nc\qand{\qquad\text{and}\qquad}
\nc\mnb[1]{\medskip\noindent{\bf #1}}
\nc\mn{\medskip\noindent}


\setlength{\tabcolsep}{10pt}

\nc{\nle}{\nn \\ &=}  
\nc{\nnl}{\nn \\ &}  
\nc{\fot}{\frac{1}{2}} 
\nc{\oo}[1]{\frac{1}{#1}} 
\newcommand{\ben}{\begin{enumerate}}
\newcommand{\een}{\end{enumerate}}
\nc{\mc}{\mathcal}
\nc{\beq}{\begin{equation}}
\nc{\eeq}{\end{equation}}
\nc{\norm}[1]{\L\| #1 \R\|}

\nc{\onenorm}[1]{\L\| #1 \R\|_1} 

\DeclareMathOperator*{\argmax}{arg\,max}


\nc{\Ra}{\Rightarrow}
\nc{\zo}{\{0,1\}}	

\makeatother

\usepackage{babel}
  
  \providecommand{\definitionname}{Definition}
  \providecommand{\lemmaname}{Lemma}
  \providecommand{\problemname}{Problem}
  \providecommand{\remarkname}{Remark}
\providecommand{\theoremname}{Theorem}

\usepackage{mathtools}

\DeclarePairedDelimiter\floor{\lfloor}{\rfloor}

\begin{document}

\title{Sample Efficient Algorithms for Learning Quantum Channels in PAC Model and the Approximate State Discrimination Problem}
\author{Kai-Min Chung\thanks{Institute of Information Science, Academia Sinica. \texttt{kmchung@iis.sinica.edu.tw } This research is partially supported by the 2016 Academia Sinica Career Development Award under Grant no. 23-17.} \  and Han-Hsuan Lin\thanks{Department of Computer Science, The University of Texas at Austin \texttt{linhh@cs.utexas.edu}}}
\maketitle

\begin{abstract}
The probably approximately correct (PAC)  model~\cite{Valiant84atheory} is a well studied model in classical learning theory. Here, we generalize the PAC model from concepts of Boolean functions to quantum channels, introducing \emph{PAC model for learning quantum channels}, and give two sample efficient algorithms that are analogous to the classical ``Occam's razor'' result~\cite{blumer1987occam}.
The classical Occam's razor algorithm is done trivially by excluding any concepts not compatible with the input-output pairs one gets, but such an approach is not immediately possible with a concept class of quantum channels, because the outputs are unknown quantum states from the quantum channel. 

To study the quantum state learning problem associated with PAC learning quantum channels, we focus on the special case where the channels all have constant output. In this special case, learning the channels reduce to a problem of learning quantum states that is similar to the well known quantum state discrimination problem~\cite{discri-survey}, but with the extra twist that we allow $\eps$-trace-distance-error in the output. We call this problem \emph{Approximate State Discrimination}, which we believe is a natural problem that is of independent interest.

We give two algorithms for learning quantum channels in PAC model. The first algorithm has sample complexity  $$O\L(\frac{\log|C| + \log(1/ \delta)} { \eps^2}\R),$$ 
but only works when the outputs are pure states, where $C$ is the concept class, $\eps$ is the error of the output, and $\delta$ is the probability of failure of the algorithm. The second algorithm has sample complexity $$O\L(\frac{\log^3 |C|(\log |C|+\log(1/ \delta))} { \eps^2}\R),$$
and work for mixed state outputs. Some implications of our results are that we can PAC-learn a polynomial sized quantum circuit in polynomial samples, and approximate state discrimination can be solved in polynomial samples even when the size of the input set is exponential in the number of qubits, exponentially better than a naive state tomography.


\end{abstract}

\section{Introduction}

In computational learning theory, the Probably Approximately Correct (PAC) model of Valiant~\cite{Valiant84atheory} gives a complexity-theoretic foundation of what it means for a concept class to be (efficiently) learnable. In the most basic setting of PAC learning model, we want to learn a set of Boolean functions, $C = \{c:\zo^n \rightarrow \zo \}$, called the concept class. The goal of a learning algorithm $A$ is to  guess the identity of an unknown target concept $c^* \in C$ from samples $\{(x_1,c^*(x_1)),(x_2,c^*(x_2)),\dots\}$, where $\{x_1,x_2, \dots\}$ are inputs randomly drawn from a distribution $D$ that is unknown to $A$. Specifically, with error parameters $\eps$ and $\delta$, for all concept $c^* \in C$ and probability distribution $D$, $A$ is required to, given access to the samples $\{(x_1,c^*(x_1)),(x_2,c^*(x_2)),\dots\}$, with probability $1-\delta$, come up with a hypothesis $h \in C$ that is $\eps$-close to $c^*$, i.e. $\Pr_{x\leftarrow D}[c(x) \neq h(x)] \leq \eps$. Such a learning algorithm is called a proper\footnote{Proper means that the hypothesis $h$ must be inside the concept class $C$, whereas an improper learner can output any $h$ as the hypothesis. All learners in this paper are proper, and we sometimes omit the term ``proper''.} $(\eps,\delta)$-PAC learner for the concept class $C$. Of course, we would like the learner $A$ to be as efficient as possible in terms of both sample complexity (i.e., the number of samples $A$ needs to access) and time complexity, and ideally, polynomial in the input length $n$ and the error parameters  $\eps^{-1}$ and $\log(1/\delta)$. Since its introduction in the 80's by Valiant, PAC learning theory has been deeply studied to characterize when efficient learning is or is not possible. 


Following Valient's PAC learning model on Boolean functions, generalization to different kinds of concept classes has been proposed, including Boolean functions on continuous spaces~\cite{blumer1989learnability}, probabilistic Boolean functions~\cite{kearns1994efficient,alon1997scale}, functions with $\{0,\dots, n\}$ outputs~\cite{natarajan1989learning,bendavid1995characterizations}, and real valued functions~\cite{bartlett1996fat}. 

With quantum computers coming closer and closer into reality, it is natural to generalize the PAC learning model to quantum channels, capturing the learnability of quantum circuits or devices that we might build in the near future. Note that quantum states has an inherent "unlearnability", as manifested by the no-cloning theorem and uncertainty principle. Therefore this study of learnability of quantum channels has an interesting interaction between classical learning theory and quantum information theory.




 Formally, we define the \emph{PAC learning model for quantum channels} as follows: Let the \emph{concept class }$C$ be a finite set of known $d_1$ to $d_2$ dimensional quantum channels.  We are trying to learn an unknown quantum channel, the \emph{target concept }$c^*\in C$. In order to do this, we are given \emph{samples} $\{(x_1,c^*(x_1)),(x_2,c^*(x_2)),\dots\}$, where $\{x_1,x_2, \dots\}$ are classical descriptions of the input quantum states to the quantum channel $c^*$ and $\{c^*(x_1),c^*(x_2),\dots\}$ are the corresponding quantum states outputted by $c^*$. The inputs are drawn from a distribution $D$ unknown to the learner. Because of the no-cloning theorem, it is hard to justify holding both the inputs and outputs as unknown quantum states,  so we assume that we have full classical description of the input state and keep the outputted states as unknown quantum states, meaning that we hold a copy of the quantum state $c^*(x_i)$ rather than the full classical description of it. A proper $(\eps,\delta)$-PAC learner for the concept class $C$ of quantum channels is a quantum algorithm that for all concepts $c^* \in C$ and distribution $D$, takes the description of $C$ and $T$ samples $\{(x_1,c^*(x_1)),(x_2,c^*(x_2)),\dots\,(x_T,c^*(x_T))\}$ as input\footnote{Note that $D$ is not part of the input and is unknown to the learner.} and with probability $1-\delta$,  outputs a \emph{hypothesis} $h \in C$ that is $\eps$-close to the target concept $c^*$, where the distance between two concepts $h,c^*$ depends on the input distribution $D$ and is defined as $\Delta(h,c^*)= \mathbb{E}_{x\in D} \L[ \Delta_{tr}(h(x) ,c^*(x)) \R]$, i.e. the expected trace distance between the outputs averaged over $D$. 
 
We gave two algorithms for learning quantum channels in PAC model that in a sense generalize the classical Occam's razor algorithm~\cite{blumer1987occam}.  In particular, our algorithms have 
$\poly\log$ sample complexity in the size of the concept class.
The first algorithm has sample complexity  $$O\L(\frac{\log|C|+\log(1/ \delta)} { \eps^2}\R),$$ but requires the outputs to be pure states. The second algorithm has sample complexity  $$O\L(\frac{\log^3 |C|(\log |C|+\log(1/ \delta))} { \eps^2}\R),$$ while  outputs can be mixed. 

The Occam's razor algorithm~\cite{blumer1987occam} is a classical PAC learner for any finite sized concept class $C$ with sample complexity $O(\log|C|)$. The idea of the algorithm is simple: keep taking samples, check which concepts in the concept class do not agree with the samples and exclude them. One can show that every time a sample is taken, a constant fraction of the concepts that are $\eps$-far away from the target concept will be excluded, so an $\eps$-close hypothesis can be found in $O(\log|C|)$ samples.

Although the Occam's razor algorithm is simple, generalizing it to our PAC model for quantum channels is troublesome. The main difference is that when learning quantum channels, the outputs from the target concept are copies of unknown (possibly high dimensional) quantum states. By the nature of quantum mechanics, if we just have a few copies of a high dimensional quantum state, we can only learn a tiny fraction of information contained in the quantum state. Since we don't really know what the outputted state is, we cannot simply ``exclude all channels that do not output this state.'' Instead, we need to carefully design the measurement we take on the outputted states, getting the information useful in distinguishing the quantum channels in our concept class. Note that the sample complexities of both of our algorithms do not depend on the dimension of the outputted states.



As a possible application of our result, our algorithms for learning quantum channels in PAC model can be viewed as a sample-efficient way to do quantum process tomography~\cite{mohseni2008quantum} when we know that the target quantum processes comes from a finite set and only care about being correct on average over an input distribution. For example, if we try to  PAC-learn a polynomial sized quantum circuit of $n$-qubits, since there are only $2^{\poly(n)}$ possible polynomial sized circuits, our result shows that we can learn it in $\poly(n)$ samples, an exponential improvement over a naive process tomography that has no restriction on concept class size and inputs.

Note that this work studies the sample complexity instead of time complexity of learning. 
Just like various other cases in theoretical computer science where the oracle-based complexity does not match the time complexity of a problem, sample complexity and time complexity of learning quantum channels in PAC model is unlikely to match.  In particular, Arunachalam et al.~\cite{arunachalam2019quantum} showed that there is no polynomial time algorithm for learning $\text{TC}^0$ or AC$^0$ circuit even knowing $D$ is uniform unless LWE can be solved in polynomial time by a quantum computer.

\subsection{Approximate State Discrimination and Related Problems}

As stated previously, the most challenging part of our algorithms is how to extract information from unknown outputted quantum states to distinguish the channels. We isolate and study this problem by focusing on the special case where the channels are ``constant,'' i.e. every channel in the concept class outputs a fixed  quantum state irrespective of the input\footnote{The outputs of different concepts are still different.}. Since the input does not matter, we don't need to write it down anymore, so the samples are just copies of the fixed unknown quantum state, and since a concept is fully specified by its unique output state, we might as well describe the concept class as a set of quantum states.  In this special case, learning quantum channels in PAC model becomes an interesting hybrid of quantum state discrimination~\cite{pgm-am,discri-lowerbound,discri-ashley,discri-survey,discri-hypo} and quantum state tomography~\cite{tomography-aram,tomography-ryan}, and we named it the \emph{approximate state discrimination }problem. The approximate state discrimination problem is formalized as follows: Let $S$ be a known finite set of $d$-dimensional density matrices. We want to learn an unknown target state $\sigma \in S$ using as few identical copies of $\sigma$ as possible. A quantum algorithm is an $(\eps,\delta)$-approximate discriminator of $S$ if, for all $\sigma \in S$, it takes the description of $S$ and $T$ copies of $\sigma$ as input and with probability $1-\delta$ outputs a state $\rho \in S$ with $\Delta_{tr}(\rho,\sigma) \le \eps$. This problem is called approximate state discrimination because it is the same as the state discrimination problem except that $\eps$-approximate answers are allowed. 


Since approximate state discrimination is a special case of PAC learning quantum channels\footnote{We choose not to write up stand-alone algorithms for the approximate state discrimination problem as it will be very similar to that of PAC learning quantum channels. However, the reader can read the analysis of our algorithms with constant output assumptions to easily get the intuition behind them.}, it can also be solved with  $$O\L(\frac{\log|S|+\log(1/ \delta)} { \eps^2}\R)$$ samples if  $S$ consists of pure states and $$O\L(\frac{\log^3 |S|(\log |S|+\log(1/ \delta))} { \eps^2}\R)$$ samples if  $S$ consists of mixed states. 

To the knowledge of the authors, the approximate state discrimination problem has not been studied in the literature. 

As the reader might have already noticed, there are many problem revolving around trying to learn something from multiple copies of a unknown quantum state $\sigma$, including the approximate state discrimination problem we just proposed. To avoid possible confusion and highlight the significance of approximate state discrimination, we list the definitions of those problems in Table~\ref{fig:learning_sigma},  sample complexities in Table~\ref{fig:sample_sigma}, and provide comparisons to the approximate state discrimination problem in the following paragraphs.

\begin{table}
    \centering
    \begin{center}
\begin{tabular}{ |c|c|c| } 
 \hline
 Problem & $\sigma$ is from & Output \\   \hhline{|=|=|=|}

 Approximate State Discrimination & Finite Set $S$ & $\rho \in S$ s.t. $\Delta_{tr}(\rho,\sigma) \le \eps$  \\ 
  \hline
 Quantum State Discrimination & Finite Set $S$ & $\sigma$ \\ 
 \hline
 Quantum State Tomography & Anywhere & $\rho$ s.t. $\Delta_{tr}(\rho,\sigma) \le \eps$  \\ 
 \hline
 Quantum Property Testing & Anywhere & Decision: $\sigma \in S$ or $\sigma$ is $\eps$-far from $S$  \\ 
 \hline
 Quantum State Certification & Anywhere & Decision: $\sigma =\rho$ or $\sigma$ is $\eps$-far from $\rho$  \\ 
 \hline
 Shadow Tomography~\cite{scott-shadow} & Anywhere & numbers $\{b_i\}$ s.t. $|b_i-\text{Tr}(\sigma E_i)|\leq \eps$ for all i\\ 
 \hline
 Pretty Good Tomography~\cite{scott-learning-06} & Anywhere & $\rho$ s.t. $\mathbb{E}_{E_i\sim D}|\text{Tr}(\rho E_i)-\text{Tr}(\sigma E_i)|\leq \eps$ \\ 
 \hline
\end{tabular}
\end{center}
    \caption{Comparison of Quantum State Learning Problems. Note that in the "pretty good tomography" problem, the learner cannot choose the measurements to apply on $\sigma$. In shadow tomography and pretty good tomography,  $\{E_i\}$ are known binary measurements.}
    \label{fig:learning_sigma}
\end{table}

\begin{table}
    \centering
    \begin{center}
\begin{tabular}{ |c|c|c| } 
 \hline
 Problem & Upper bound & Lower bound \\   \hhline{|=|=|=|}

 Approximate State Discrimination (pure) & $O\L(\frac{\log|S|+\log(1/ \delta)} { \eps^2}\R)$ & $\tilde{\Omega}((1-\delta)(\log |S|)/\eps^2)$  \\ 
  \hline
  Approximate State Discrimination & $O\L(\frac{\log^3 |S|(\log |S|+\log(1/ \delta))} { \eps^2}\R)$ & $\tilde{\Omega}((1-\delta)(\log |S|)/\eps^2)$   \\ 
  \hline
 Quantum State Discrimination & $\frac{2(log|S|-\log \delta)}{-\log F}$~\cite{pgm-aram} & $\frac{
 \log |S| +\log \delta}{\log(\lambda d)}~\cite{pgm-aram} $\\ 
 \hline
 Quantum State Tomography & $O\L(\frac{ rd} { \eps^2}\R)$~\cite{tomography-aram,tomography-ryan} & $\tilde{\Omega}\L(\frac{ rd} { \eps^2}\R)$~\cite{tomography-aram,tomography-ryan}  \\ 
 \hline
 Property Testing of Quantum State (pure) & $O\L(\frac{ \log |S|} { \eps^2}\R)$~\cite{or-lemma} & $\Omega\L(\frac{1}{\eps^2}\R)$~\cite{property-review} \\ 
 \hline
 Property Testing of Quantum State & N/A & $\Omega\L(\frac{d}{\eps^2}\R)$~\cite{state-cert}\\
 \hline
 Quantum State Certification (pure) & $O\L(\frac{1}{\eps^2}\R)$~\cite{property-review} & $\Omega\L(\frac{1}{\eps^2}\R)$~\cite{property-review} \\ 
 \hline
 Quantum State Certification  & $O\L(\frac{d}{\eps^2}\R)$~\cite{state-cert} & $\Omega\L(\frac{d}{\eps^2}\R)$~\cite{state-cert} \\ 
 \hline
 Shadow Tomography~\cite{scott-shadow} & $\tilde{O}\L(\frac{\log(1/\delta)\log^4 M \log d}{\eps^4}\R)$ & $\Omega\L(\frac{min\{d^2, \log M\}}{\eps^2}\R)$ \\ 
 \hline
 Pretty Good Tomography~\cite{scott-learning-06} & $\tilde{O}\L(\frac{\log d}{\eps^8 \delta^4}\R)$ & $\Omega \L(\frac{\log d}{\eps \delta^4}\R)$  \\ 
 \hline
\end{tabular}
\end{center}
    \caption{Sample complexities of Quantum State Learning Problems. $\sigma$ is a $d$-dimensional quantum state and $\delta$ is the failure probability of the learner. If $\delta$ is not specified, the failure probability is $1/3$. Some problems have different complexity when $\sigma$ is promised to be a pure state. For quantum state tomography, $r$ is rank of $\sigma$. For quantum state tomography, $F$ is the upper bound on pair wise fidelity and $\lambda$ is the upper bound on eigenvalues of the density matrices. For shadow tomography, $M$ is the number of binary measurements. }
    \label{fig:sample_sigma}
\end{table}

In the \emph{quantum state discrimination} problem~\cite{pgm-aram,pgm-am,discri-lowerbound,discri-ashley,discri-survey,discri-hypo,barnett2009quantum,chefles2000quantum}, which is also called as quantum detection problem~\cite{discri-ashley} or quantum hypothesis testing~\cite{discri-hypo}, we are promised that the state $\sigma$  comes from a known finite set $S$, and we want to find out which $\sigma$ we were given exactly.\footnote{Because quantum state discrimination is studied by experts of many different backgrounds, it also have many different formulations. A commonly studied setting in the literature is the one-shot and average error case, where there is a known distribution $P$ over $S$, the learner is given one copy of $\sigma$ according to $P$ and is trying to minimize his error probability.} Since we cannot distinguish two quantum states that are arbitrary close to each other, the input set $S$ usually comes with a promise of minimum distance\footnote{Usually expressed in terms of maximum fidelity in the literature.} between its elements. Compared to the state discrimination problem, approximate state discrimination has the same promise of the finite input set, but allows an approximate output instead of finding the exact  answer. Because an approximate output is allowed, solving approximate state discrimination does not require the promise of a minimum distance between input states in $S$. Because we allow arbitrarily close input states, the naive state discrimination algorithm of taking several copies to amplify the minimum distance then taking a PGM (pretty good measurement) does not work for the approximate state discrimination problem, since the error probability of PGM is not bounded when some of the states are close to each other. In fact, there are several pathological distance structures that rules out some naive modifications to the PGM algorithm, for example, a "string" of states that are all close to their neighbors but with the states on the two heads far away from each other. See Figure~\ref{fig:string}. Solving the approximate state discrimination problem against \emph{any} distance structure is the main technical contribution of our work. 

In the problem of \emph{quantum state tomography}~\cite{tomography-aram,tomography-ryan} , we want to get an $\eps$-approximation of the unknown state  $\sigma$. Compared to quantum state tomography, approximate state discrimination has the same goal of finding an $\eps$-close output, but we have the extra promise that the unknown state comes from a known finite set $S$. As a result,  we get algorithms of sample complexity $O(\polylog |S|)$, which is \emph{independent} of $d$, the dimension of the target state.  
Compared to the $\Theta(d^2/\eps^2)$ sample complexity of quantum state tomography, we get an exponential improvement even with fairly large $|S|$.
For example, we get an exponential improvement when $|S|$ is exponential in the number of qubits, $\log d$
, in which case the approximate state discrimination problem can be solved in $O(\polylog(d))$ samples, while a full state tomography need  $\Theta(d^2/\eps^2)$ samples.

\begin{figure}
    \centering
    \includegraphics[trim={0 0 0cm 0},clip,width=0.6\textwidth]{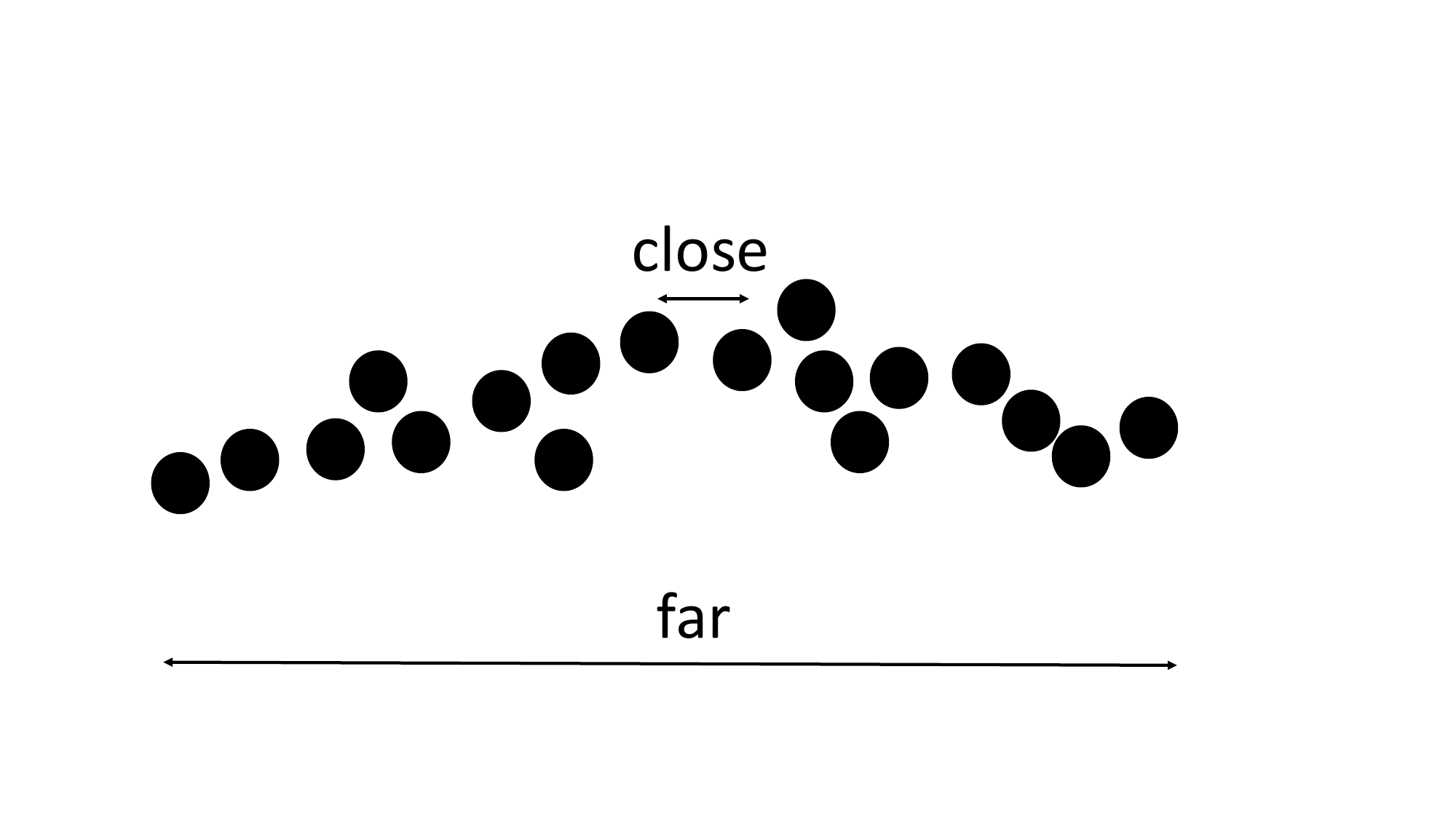}
    \caption{A pathological distribution of $S$ where every state is close to its neighbor but the states at two ends are far away from each other.}
    \label{fig:string}
\end{figure}

In the \emph{property testing of quantum state} problem~\cite{property-review,or-lemma},  we are given copies of an unknown quantum state $\sigma$ and want to determine whether $\sigma \in S$, where $S$ is a known (possibly infinite) set of quantum states, or $\sigma$ is $\eps$-far from anything in $S$. Harrow, Lin, and Montanaro~\cite{or-lemma} give an $O(\log|S|/\eps^2)$ upper bound on the \emph{property testing of quantum state} problem in the special case where $S$ is a finite set of \emph{pure} states. Comparing to~\cite{or-lemma}, our pure state algorithm has essentially the same $O(\log|S|/\eps^2)$ sample complexity. Note that in quantum property testing, the unknown state $\sigma$ does not always come from $S$, and we only want a decision answer instead of finding a state, so it is pretty different from approximate state discrimination.  Also note that the quantum property testing result of~\cite{or-lemma} cannot be generalized to arbitrary mixed states, as~\cite{state-cert} shows that to \emph{certify} a mixed state requires $\Omega({d/\eps^2})$ samples. In the quantum state certification problem, we are given copies of an unknown quantum state $\sigma$ and ask whether $\sigma$ is equal to some known state $\rho$ or $\eps$ far from it, so it is obviously a special case of quantum property testing with $|S|=1$, and the lower bound of $\Omega(d/\eps^2)$ is much larger than the sample complexity of  $O(\log |S|/\eps^2)$ in~\cite{or-lemma} unless $|S|$ is exponential or more in $d$.

Aaronson proposed the problem of pretty good tomography~\cite{scott-learning-06} and shadow tomography~\cite{scott-shadow}. Like approximate state discrimination, pretty good tomography and shadow tomography are ``speedup'' version of quantum state tomography. Unlike approximate state discrimination, they try to relax the output instead of restrict the input $\sigma$. Those problems aim to get good approximation to some known binary measurements $\{E_i\}$. In pretty good tomography, $E_i$ are drawn from an unknown distribution $D$ and the learner get samples that are pairs of ($E_i$ drawn, the corresponding (random) measurement result on $\sigma$). Note that the learner cannot do arbitrary measurements on $\sigma$. The learner aims to minimize the average error $\mathbb{E}_{E_i\sim D}|\text{Tr}(\rho E_i)-\text{Tr}(\sigma E_i)|$ over the unknown distribution $D$. In shadow tomography, the learner can do whatever he want, and he tries to estimate the measurement probability $\text{Tr}(E_i \sigma )$ for all $i$, minimizing the worst case error. Note that a shadow tomography learner do not need to output a quantum state that fit the list of probability he estimated.

\subsection{Related Works and Independent Work}

There are several works in the literature that study the sample complexity of PAC learning with different ways of generalization to quantum information.  Cheng,  Hsieh, and Yeh~\cite{min-hsiu-learning} studies the sample complexity of PAC learning arbitrary two outcome measurements, where the inputs are quantum states, and the learner has complete classical description of them. They show an upper of sample complexity linear in the dimension of the Hilbert space. Note that one can trivially get a lower bound of similar order by noticing that Boolean functions is a subset of two outcome measurements. Arunachalam
and de Wolf~\cite{wolf-learning} studies the sample complexity of PAC learning classical functions with quantum samples and shows that there is no quantum speed up. See~\cite{wolf-learning-survey} for a  survey of quantum learning theory.

\paragraph{Independent Work} ~\label{sec:indep work}

Independent to our work, in \cite{buadescu2020improved}, B{\u{a}}descu and O'Donnell formulate the problem of \emph{quantum hypothesis selection}. Quantum hypothesis selection can be viewed as a generalization of our approximate state discrimination problem where the unknown state $\sigma$ might not be in the hypothesis set $|S|$, and the learner what to find the state in $|S|$ that is closest  to the unknown state $\sigma$ (see Theorem 1.5 of ~\cite{buadescu2020improved} for the formal definition). This is similar to the agnostic learning model~\cite{Haussler92,KSS94}. Let $\eta$ be the minimum distance from the unknown state to something in $|S|$, B{\u{a}}descu and O'Donnell give an algorithm that finds some $\rho \in S$ such that $\Delta_{tr}(\rho,\sigma)\leq 3.01\eta +\eps$ using $O\L( \frac{\log^3 |S| +\log(1/\delta)}{\eps^2}\R)$ samples. Since quantum hypothesis selection is a generalization of approximate state discrimination, B{\u{a}}descu and O'Donnell's algorithm supersedes our algorithm for approximate state discrimination for the mixed state. 

However, it is important to note that B{\u{a}}descu and O'Donnell's algorithm requires many identical copies of the unknown state and thus does not generalize to our main result of PAC learning of quantum channels because every channel output might be a different state. On the other hand, as will shown in the following technical overview, our approach for approximate state discrimination involves a binary search through gap amplification and pretty good measurement and generalizes naturally to the PAC learning of quantum channels.


In ~\cite{aharonov2021quantum}, Aharonov, Cotler, and Qi introduced the notion of \emph{quantum algorithmic measurement}, which broadly captures the query and computational complexity of quantum experiments, including those that generate unknown identical quantum states. In \cite{huang2021information}, Huang, Kueng, and Preskill compared the complexity of classically or quantumly training a machine learning model for predicting outcomes of physical experiments.

\subsection{Technical Overview}\label{sec:intui}
The intuition behind both of our learning algorithms start with looking at the tensor product of all outputted states. The fidelity between such tensor produces decays exponentially in the number of samples drawn, so with enough samples , the tensor products from $\eps$-far concepts will become almost orthogonal (see Lemma~\ref{lem:concept-amp-state}),   so intuitively, we should be able to distinguish between them. 


\paragraph{Pure State algorithm} In the case where the channels always output pure states, we have a rather simple algorithm. The key part is a theorem by Sen~\cite{pranab} on high dimensional random orthonormal measurements, which states that if we do a  measurement of random orthonormal basis on two pure states, with high probability\footnote{The probability goes to 1 as the dimension goes to infinity.}, the trace distance between the distribution of measurement outcome is lower bounded by a constant times the trace distance between those two states (see Theorem~\ref{lem:random-measure}). This result might seem counter-intuitive, but remember that a random orthonormal measurement in $d$ dimension has $d$ possible outputs instead of $2$. With this theorem in hand, the algorithm is rather easy: take enough samples to amplify the distance between outputted states and  do a random orthonormal measurement on each sample. Choose the hypothesis as the channel that most likely to give the measurement result. 


\paragraph{Mixed State algorithm} Our thought process on designing a learner for the channels that output mixed states is the following. In this case,   Theorem~\ref{lem:random-measure} does not give us a useful result, so we need to find something else. Noticing the connection to the quantum state discrimination problem, we turned to pretty good measurement (PGM, Definition~\ref{def:pgm}), a well studied tool for solving the quantum state discrimination problem. However, the lack of minimum distance between our outputted states is pretty pathological to PGM, so it was pretty easy to self-reject all our attempts. Following that, we sought guidance from the analysis of classical Occam's razor algorithm, where a constant fraction of concepts are ruled out by each sample. We tried to divide the concept class into two sets, then do a PGM to distinguish those two, so we can recurse this into a binary search. Cutting the concepts into two sets does not work either because there can be concepts really close to any cut, which again is pathological to PGM. At this point, we realized that we need to have some kind of minimum distance for our PGM, so we cut the concept class into \emph{three} sets, $S_{yes}$, $S_{no}$, and $S_{unknown}$. We set a minimum distance $\gamma$ between elements of $S_{yes}$ and $S_{no}$, so those two sets can be distinguished. This is the idea that works out. See Figure~\ref{fig:cuts} for a graphical representation. 

\begin{figure}
     \centering
     \begin{subfigure}{0.45\textwidth}
  \centering
  \includegraphics[width=0.9\linewidth]{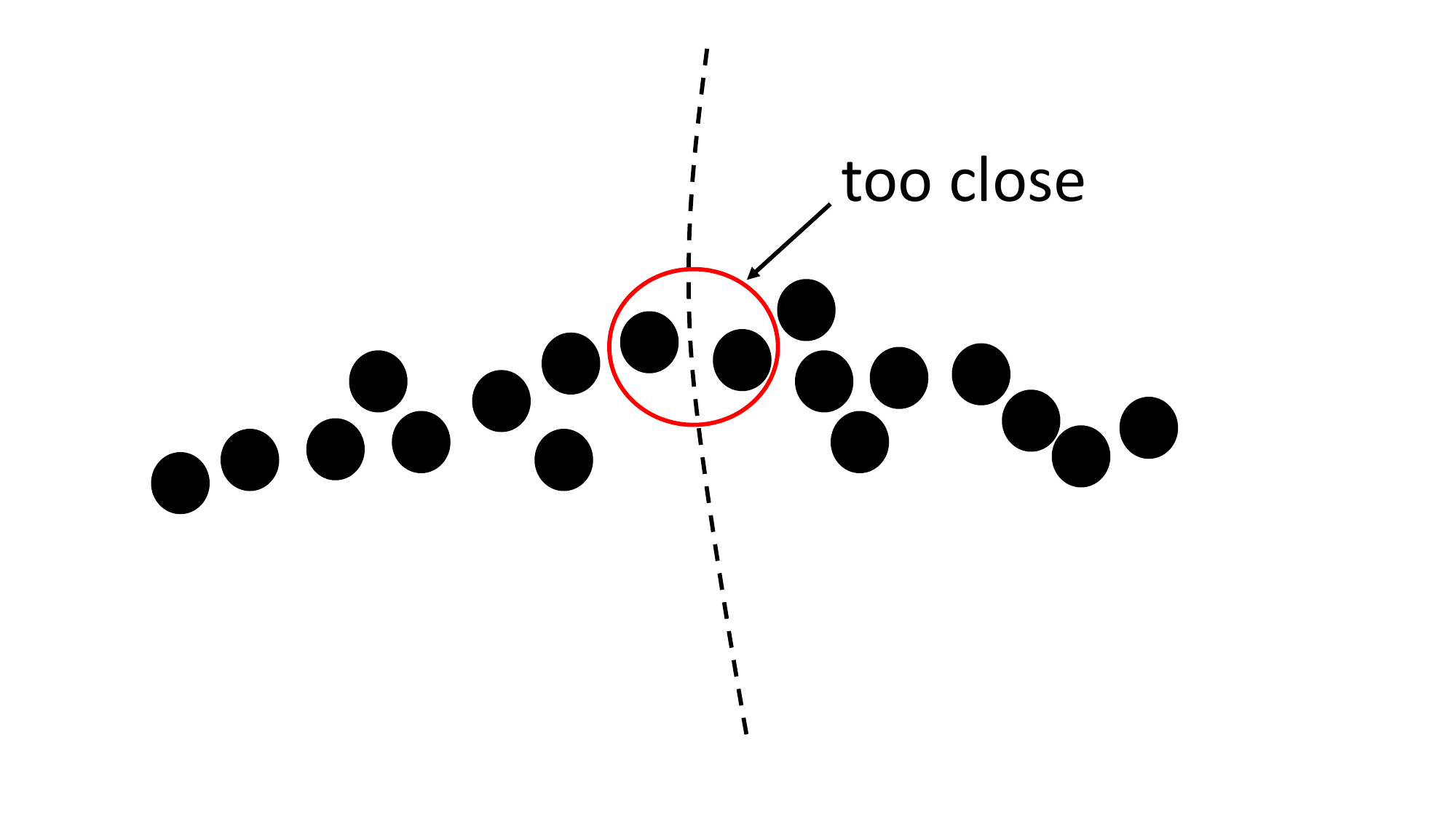}
  \caption{}
  \label{fig:1cuts}
\end{subfigure}%
\begin{subfigure}{.45\textwidth}
  \centering
  \includegraphics[width=0.9\linewidth]{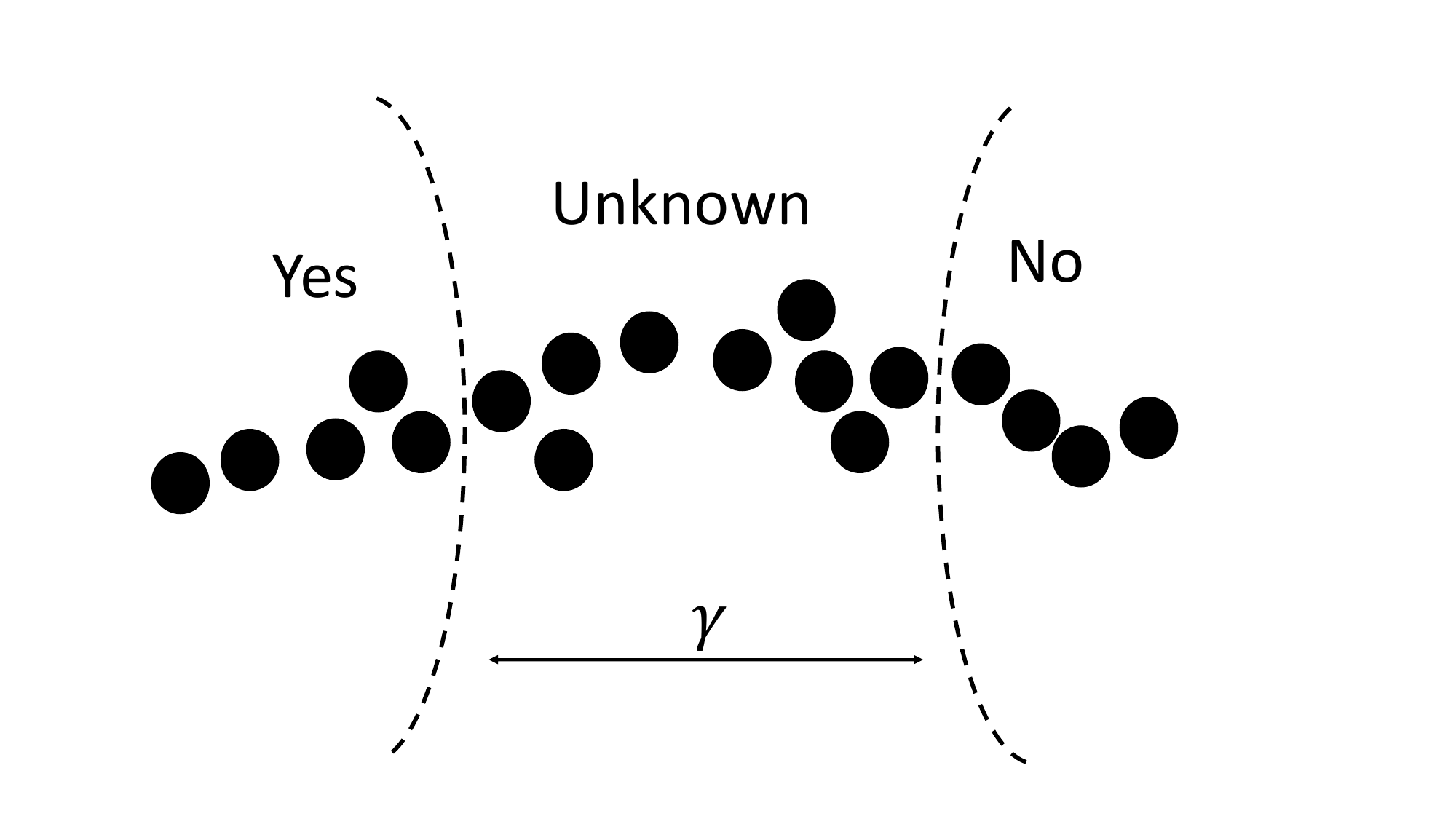}
  \caption{}
  \label{fig:2cuts}
\end{subfigure}%
     \caption{(a) Pathological case when trying to cut the concept class into two sets. (b) Cutting the concept class into three sets.}
     \label{fig:cuts}
 \end{figure}

To follow our intuition in the previous paragraph, we give a definition about the distance between two sets of quantum states. Actually fidelity is more useful than trace distance, so we give the following definition of fidelity between sets of quantum states, which is the maximum fidelity among all pairs: $$F\L( S_{yes},S_{no} \R) = \max \L\{F(\sigma,\rho)|\sigma\in S_{yes}, \rho \in S_{no} \R\}$$

\paragraph{Bichromatic State Discrimination Problem (BSD)}
The key component our mixed state algorithm is solving what we called \emph{$(\eta,N)$-Bichromatic State Discrimination Problem} (BSD). The $(\eta,N)$-Bichromatic State Discrimination Problem is defined as follows: given complete information of two sets of quantum states, $S_{yes}$ and $S_{no}$, with fidelity $F(S_{yes}, S_{no})\leq \eta$ and size $ S_{yes} \leq N$, $ S_{no} \leq N$, and one copy of an unknown quantum state $\sigma$, the goal is to decide whether $\sigma \in S_{yes}$ or $\sigma \in S_{no}$. A quantum algorithm solves $(\eta,N)$-BSD with error $\delta$ if for all $S_{yes}$ and $S_{no}$ such that $F(S_{yes}, S_{no})\leq \eta$, $ S_{yes} \leq N$, and $ S_{no} \leq N$, given complete information about $S_{yes}$ and $S_{no}$ and one copy of an unknown quantum state $\sigma$ as input to the algorithm,  the algorithm output a $yes/no$ answer satisfies the following two conditions\footnote{The learner can output anything if $\sigma$ does not come from either of the two sets.}:
\begin{enumerate}
    \item  If $\sigma \in S_{yes}$, the learner outputs $yes$ with probability $(1-\delta)$.
    \item If $\sigma \in S_{no}$, the learner outputs $no$ with probability $(1-\delta)$.
\end{enumerate}
See Figure~\ref{fig:bsd} for some graphical intuition of BSD.

\begin{figure}
    \centering
    \includegraphics[trim={0 0 0cm 0},clip,width=0.6\textwidth]{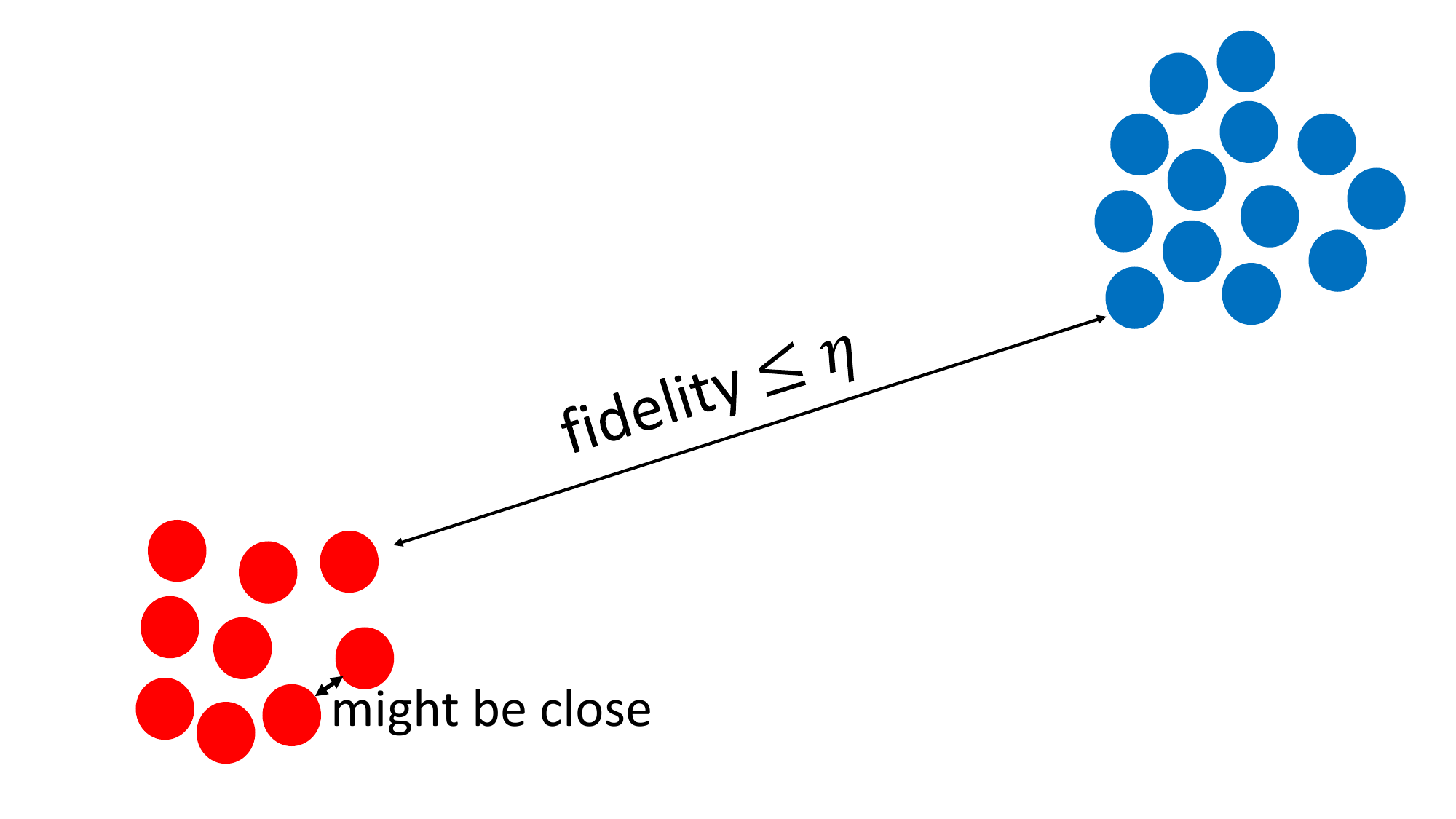}
    \caption{Bichromatic State Discrimination Problem}
    \label{fig:bsd}
\end{figure}

 Note that BSD only requires maximum fidelity \emph{between} the two sets; two states from the same set can be arbitrarily close. This does not violate quantum state discrimination lower bounds because the solver only needs to discriminate between the two sets.
 
 We are able to show that BSD can be solved with good enough parameter:
\begin{thm}\label{thm:bsd}
There exist an algorithm that solves $(\eta,N)$-BSD with error $\delta=N^2 \eta$.
\end{thm}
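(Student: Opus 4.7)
My plan is to take the algorithm to be the Helstrom measurement distinguishing the \emph{averages} of the two sets: let $\bar\rho_{yes} = \frac{1}{|S_{yes}|}\sum_{\rho\in S_{yes}}\rho$ and $\bar\rho_{no} = \frac{1}{|S_{no}|}\sum_{\rho\in S_{no}}\rho$, let $M_{yes}$ be the spectral projection onto the positive part of $\bar\rho_{yes}-\bar\rho_{no}$ and $M_{no}=I-M_{yes}$, and apply this two-outcome POVM to the unknown copy of $\sigma$, answering $yes$/$no$ according to the outcome.

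The error analysis has three ingredients. Suppose $\sigma=\rho_k\in S_{yes}$; the case $\sigma\in S_{no}$ is symmetric. First, because $\rho_k$ is one of the $|S_{yes}|\le N$ summands of $|S_{yes}|\,\bar\rho_{yes}$, we have the operator inequality $\rho_k\preceq |S_{yes}|\,\bar\rho_{yes}$, and hence
\[ \tr(M_{no}\rho_k)\;\le\;|S_{yes}|\,\tr(M_{no}\bar\rho_{yes}). \]
Second, the Helstrom identity $\tr(M_{no}\bar\rho_{yes})+\tr(M_{yes}\bar\rho_{no})=1-\tfrac{1}{2}\|\bar\rho_{yes}-\bar\rho_{no}\|_1$ combined with the Fuchs--van de Graaf bound $1-\tfrac{1}{2}\|\bar\rho_{yes}-\bar\rho_{no}\|_1\le F(\bar\rho_{yes},\bar\rho_{no})$ yields $\tr(M_{no}\bar\rho_{yes})\le F(\bar\rho_{yes},\bar\rho_{no})$. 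Third, the mixture-fidelity inequality $F\bigl(\sum_i p_i\rho_i,\sum_j q_j\sigma_j\bigr)\le\sum_{ij}\sqrt{p_i q_j}\,F(\rho_i,\sigma_j)$ applied with uniform weights together with the hypothesis $F(\rho_i,\sigma_j)\le\eta$ for cross-set pairs gives $F(\bar\rho_{yes},\bar\rho_{no})\le\sqrt{|S_{yes}|\,|S_{no}|}\,\eta\le N\eta$. Chaining the three inequalities yields $\tr(M_{no}\rho_k)\le N\cdot N\eta=N^2\eta$, matching the claimed $\delta$.

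The main obstacle is the mixture-fidelity inequality, which I would prove via Uhlmann's theorem. Take classically-indexed purifications $|\Psi\rangle=\sum_i\sqrt{p_i}|\psi_i\rangle_{AR}|i\rangle_{R'}$ and $|\Phi\rangle=\sum_j\sqrt{q_j}|\phi_j\rangle_{AR}|j\rangle_{R'}$ of $\bar\rho$ and $\bar\sigma$, where $|\psi_i\rangle,|\phi_j\rangle$ purify $\rho_i,\sigma_j$ on a common ancilla $R$. Uhlmann gives $F(\bar\rho,\bar\sigma)=\max_W|\langle\Psi|(I_A\otimes W)|\Phi\rangle|$ over unitaries $W$ on $R\otimes R'$. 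Expanding, $\langle\Psi|(I_A\otimes W)|\Phi\rangle=\sum_{ij}\sqrt{p_i q_j}\,\langle\psi_i|(I_A\otimes V_{ij})|\phi_j\rangle$ where $V_{ij}=\langle i|_{R'}W|j\rangle_{R'}$ is a contraction on $R$. Each per-pair factor satisfies $|\langle\psi_i|(I_A\otimes V_{ij})|\phi_j\rangle|\le F(\rho_i,\sigma_j)$: dilating $V_{ij}$ to a unitary on a larger ancilla and viewing the resulting inner product as one between purifications of $\rho_i$ and $\sigma_j$, the bound follows from Uhlmann again. Applying the triangle inequality to the expansion yields the desired inequality. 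The remaining ingredients (the operator inequality $\rho_k\preceq |S_{yes}|\bar\rho_{yes}$ and Fuchs--van de Graaf) are immediate, and a symmetric argument handles $\sigma\in S_{no}$.
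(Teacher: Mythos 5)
Your proof is correct, but it takes a genuinely different route from the paper's. The paper runs the pretty good measurement on all of $S_{yes}\cup S_{no}$ under an arbitrary prior, bounds the average probability of cross-set confusion by $\sum_{i\in S_{yes}}\sum_{j\in S_{no}}F(\sigma_i,\sigma_j)\le N^2\eta$ via a block-matrix norm lemma (a variant of Barnum--Knill, proved in Appendix~\ref{app:pgm}), and then invokes the minimax theorem to convert this average-case guarantee into the existence of a worst-case measurement; the resulting measurement is therefore non-constructive. You instead exhibit an explicit two-outcome Helstrom measurement on the uniform mixtures $\bar\rho_{yes},\bar\rho_{no}$ and bound the error on each individual state directly, by chaining the operator inequality $\rho_k\preceq|S_{yes}|\,\bar\rho_{yes}$, the Fuchs--van de Graaf bound, and the mixture-fidelity inequality $F(\sum_i p_i\rho_i,\sum_j q_j\sigma_j)\le\sum_{ij}\sqrt{p_iq_j}\,F(\rho_i,\sigma_j)$, which your Uhlmann argument establishes correctly (the blocks $\langle i|W|j\rangle$ of a unitary are contractions, and the overlap of two purifications under a contraction on the purifying system is at most the fidelity, since it equals $|\tr(VY)|\le\|Y\|_1$). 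Every step checks out, including the symmetric case $\sigma\in S_{no}$, and the chain yields $|S_{yes}|\cdot\sqrt{|S_{yes}|\,|S_{no}|}\cdot\eta\le N^2\eta$ as claimed. What your approach buys is that it is constructive and bypasses both the PGM machinery and the minimax argument; notably, your mixture-fidelity bound supplies exactly the lower bound on the trace distance between convex combinations of density matrices, $\Delta_{tr}(\bar\rho_{yes},\bar\rho_{no})\ge 1-N\eta$, that the paper's footnote following Lemma~\ref{lem:bi-pgm} says it could not locate in the literature. What the paper's route buys is an error bound in terms of the full sum $\sum_{ij}F(\sigma_i,\sigma_j)$, which can be sharper than $N^2\eta$ when only a few cross pairs attain the maximal fidelity, though for the application in Theorem~\ref{thm:mixed} both bounds reduce to the same $N^2\eta$.
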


The proof idea of Theorem~\ref{thm:bsd} is trying to apply PGM on  $S_{yes}\cup S_{no}$.  We start with the observation that the result of  \cite{pgm-am} and \cite{pgm-bk}, which gives an upper bound on PGM's error probability of mistaking one state as other states, can be generalized to an upper bound on PGM's error probability of mistaking one subset of states to its complement subset (See Appendix~\ref{app:pgm}). This almost gives us the required error bound for BSD, except that the PGM result is for the average case, where $\sigma$ is drawn from some probability distribution, so we turned it into a worst case result with the minimax argument of \cite{pgm-aram}.

\paragraph{Back to Learning Quantum Channels} With BSD solved, we can get an algorithm that recursively exclude a constant fraction of the concept class. In each recursion, the algorithm partition the remain concepts into three sets, $S_{yes}, S_{unknown}$, and $S_{no}$. Ideally, $S_{yes}$ and $S_{no}$ both occupy a constant fraction of the remaining concepts and have minimum distance $\gamma=\Omega(1/\poly\log |C|)$. Noticing that the fidelity between tensor products of outputs decays exponentially with number of samples by lemma~\ref{lem:concept-amp-state}, the BSD between $O(\log |C|/\gamma)$ samples of $S_{yes}$ or $S_{no}$ can be solved with high probability. If the target concept is in $S_{yes}$, the BSD solver will return $yes$ with high probability, and if the target concept is in $S_{no}$, the BSD solver will return $no$. If the target concept is from $S_{unknown}$, the BSD solver might return anything, but what we can be sure is that, if the BSD solver returned $yes$, the target concept is not from $S_{no}$, and if the BSD solver return $no$, the target concept is not from $S_{yes}$. Therefore, we can always exclude either $S_{yes}$ or $S_{no}$ as possible target concept. 

There is another complication in that the distance between the concepts depends on the unknown distribution $D$ and thus cannot be calculated. In stead, we use the \emph{empirical distance}  between concepts, $\Delta_{emp}(c_1,c_2)=\frac{1}{T}\sum_{i=1}^T \L [ \Delta_{tr} \L(c_1(x_i),c_2(x_i)\R) \R ] $, where $\{x_i\}$ are the inputs points we drawn in each recursion. Our calculation shows that that the error incurred from this change of distance measure is negligible. 

\paragraph{Partition Sub-algorithm} It is not always possible to have an ideal partition where $S_{yes}$ and $S_{no}$ are both constant-fraction sized\footnote{By "constant-fraction sized" we mean "occupies a constant fraction of the remaining concepts".} and separated by the gap $\gamma$. Therefore, we designed a classical partition sub-algorithm (Algorithm~\ref{alg:partition}) to handle these exceptions. 

An example where the ideal partition is not possible is the extreme case where every concept in the concept class is literally identical to each other. Note that in this extreme case can be trivially solved by output anything in the concept class as the hypothesis because everything is $\eps$-close to $c^*$. 

Our partition sub algorithm builds on the intuition of what happened in the above extreme case. More specifically, our partition algorithm will not reserve a constant-fraction sized $S_{no}$ if a significant fraction of $C$ is clustered around a concept. In such case, we choose the cluster as $S_{yes}$ with a $\gamma$-thick ``shell'' of $S_{unknown}$ around it. If we measured $no$, we can rule out $S_{yes}$, which is a constant fraction of $|C|$. If we measured yes, we can output the center of the cluster as the hypothesis, and we tune $\gamma$ so that everything in either $S_{yes}$ or $S_{unknown}$ is $\eps$-close to the center. This completes our algorithm for  mixed state outputs.







\subsection{Lower Bounds and Agnostic Model}\label{sec:LB}

We complement our positive results on the sample complexity of PAC learning quantum channels with two simple lower bounds. First, by adapting a lower bound argument in~\cite{tomography-aram}, we prove that $\tilde{\Omega}((\log |C|)/\eps^2)$ samples are necessary to PAC learn quantum channels when the outputs are pure states, showing that our positive result is tight in the dependency on $|C|$ and $\eps$. In particular, for the dependency on $\eps$, this is in contrast with the classical results on the sample complexity for PAC learning concepts with Boolean outputs, where a tight $\Theta((\log|C|)/\eps)$ sample complexity is known~\cite{Kearns:1994:ICL:200548,hanneke2016optimal}.\footnote{The classical results show that the sample complexity is characterized by the VC dimension of the concept class $C$. In the case that $C$ is finite, $\log|C|$ is a trivial upper bound on the VC dimension.}

Agnostic model is a learning model closely related to the PAC model, and the two models have similar sample complexity~\cite{Haussler92,KSS94}. In the agnostic model, the samples comes from a concept $c_s$ that is not necessarily inside the concept class $C$. Accordingly, the goal of the learner is to find, with $\eps$-distance error, the target concept $c^*\in C$ that is closest to $c_s$. We introduce the agnostic model for learning quantum channels, see section~\ref{sec:agno} for details. Interestingly, in stark contrast to our algorithms that have dimension-independent sample complexity for learning quantum channels in PAC model, we found an $\Omega(\sqrt{d})$ lower bound on the sample complexity for learning quantum channels in agnostic model with output dimension $d$. Thus, in the agnostic model, learning quantum channels requires number of samples polynomial in the dimension, so it is not possible to efficiently learn quantum channels with large output dimension.  Also, our negative example is in fact classical in nature, consisting of two concepts that output classical distributions, so learning classical distributions efficiently in agnostic model in large dimension is also impossible. However, since quantum pure states are not generalizations of classical distributions, the possibility of sample efficiently learn quantum channels with \emph{pure state output} in agnostic is still open.

\subsection{VC dimension?}
The sample complexity of learning classical Boolean functions are tightly characterized by the \emph{VC dimension}~\cite{vapnik2015uniform,blumer1989learnability}. Ideally, we would like to generalize VC dimension to our PAC model for quantum channels and tightly characterize the sample complexity of any channel. However, this goal seems to be too ambitious. {We have two reasons to believe that generalizing VC dimension to quantum channels might be too hard. First note that the quantum channel might output mixed states, which generalize classical distributions over exponentially many discrete values. To our knowledge, VC dimension for classical functions that output such a large distribution has not been studied, so a VC dimension for quantum channels would be two steps ahead. Second, classically the sample complexity in the agnostic model is also characterized by the VC dimension. But as we shown in Section~\ref{sec:LB}, PAC model and agnostic model have very different sample complexity on learning channels. Therefore, we instead seek an analog of the Occam's razor result as a first step to study our PAC model. 

\section{Preliminary}


Throughout this paper, $\log$ is base 2 and $\ln$ is base $e$.

We use $\norm{\cdot}_1$ to denote the trace norm $\norm{A}_1=\tr \sqrt{A^\dag A}$. We use $\norm{\cdot}_2$ or $\norm{\cdot}_F$ to denote the Frobenius norm $\norm{A}_2=\sqrt{\tr (A^\dag A)}$.

Denote the trace distance and fidelity between two distribution $D_1,D_2$ as $\Delta_{tr}\L(D_1,D_2 \R)$ and $F(D_1,D_2)$, where the trace distance is equal to the total variation distance. Denote the trace distance and fidelity between two quantum states $\rho_1,\rho_2$ as $\Delta_{tr}(\rho_1,\rho_2)=\fot \norm{\rho_1-\rho_2}_1$ and $F(\rho_1,\rho_2)=\norm{\sqrt{\rho_1}\sqrt{\rho_2}}_1$. For a quantum state $\sigma$ and a quantum measurement $M$, denote $M(\sigma)$ as the output probability distribution when applying $M$ on $\sigma$.

Note that fidelity and trace distance are related by 
$$ 1-F \leq \Delta_{tr} \leq \sqrt{1-F^2}.$$

For two quantum channel concepts $c_1,\, c_2$, define the distance between them with respect to $D$ as $$\Delta(c_1,c_2)= \mathbb{E}_{x\in D} \L[ \Delta_{tr}(c_1(x),c_2(x)) \R].$$ We say that $c_1,\, c_2$ are $\eps$-close if $\Delta(c_1,c_2) \leq \eps$ and $\eps$-far if $\Delta(c_1,c_2) \geq \eps$.  For two sets of concepts $S_1$ and $S_2$, define the distance between them as $\Delta\L( S_1,S_2 \R) = \min \L\{\Delta(c_1,c_2)|c_1\in S_1, c_2\in S_2 \R\}$. 
\subsection{Chernoff Bound}

We use the following standard multiplicative version of Chernoff bound.

\begin{thm} Let $X_1,\dots,X_T \in [0,1]$ be independent random variables with $\E[X_i] = \mu_i$. Let $X = (1/T) \sum_i X_i$, $\mu = (1/T) \sum_i \mu_i$ and $\alpha \in (0, 1)$. We have 
$$ \Pr[ |X - \mu | \geq \alpha \mu] \leq 2^{-\Omega(\alpha^2 T \mu )}.$$
\end{thm}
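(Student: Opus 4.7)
The plan is to apply the standard Cramér--Chernoff exponential moment method. Let $S = \sum_{i=1}^T X_i$ so that $\E[S] = T\mu$, and rewrite the target event as $|S - T\mu| \geq \alpha T \mu$. I would handle the upper and lower tails separately. For the upper tail $\Pr[S \ge (1+\alpha)T\mu]$, Markov's inequality applied to $e^{\lambda S}$ for a parameter $\lambda > 0$ to be optimized gives, using independence,
$$\Pr[S \ge (1+\alpha)T\mu] \le e^{-\lambda(1+\alpha)T\mu} \prod_{i=1}^T \E[e^{\lambda X_i}].$$

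The core step is to bound each MGF factor. Since $X_i \in [0,1]$, convexity of $t \mapsto e^{\lambda t}$ yields $e^{\lambda X_i} \le 1 - X_i + X_i e^{\lambda}$, and taking expectations gives $\E[e^{\lambda X_i}] \le 1 + \mu_i(e^\lambda - 1) \le \exp(\mu_i(e^\lambda - 1))$. Multiplying across $i$ and using $\sum_i \mu_i = T\mu$ produces $\prod_i \E[e^{\lambda X_i}] \le \exp\bigl(T\mu(e^\lambda - 1)\bigr)$. Optimizing by choosing $\lambda = \ln(1+\alpha)$ then gives the classical form
$$\Pr[S \ge (1+\alpha)T\mu] \le \exp\bigl(-T\mu\bigl((1+\alpha)\ln(1+\alpha) - \alpha\bigr)\bigr).$$
A Taylor expansion shows $(1+\alpha)\ln(1+\alpha) - \alpha \ge \alpha^2/3$ on $\alpha \in (0,1)$, which yields $\exp(-\Omega(\alpha^2 T\mu))$.

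For the lower tail $\Pr[S \le (1-\alpha)T\mu]$, I would repeat the argument with $\lambda < 0$ (or equivalently apply the upper-tail argument to the variables $1 - X_i$), obtaining the symmetric bound $\exp(-T\mu((1-\alpha)\ln(1-\alpha) + \alpha))$; the analogous Taylor estimate $(1-\alpha)\ln(1-\alpha) + \alpha \ge \alpha^2/2$ on $(0,1)$ again yields $\exp(-\Omega(\alpha^2 T\mu))$. A union bound over the two tails, together with the fact that changing the exponential base from $e$ to $2$ only rescales the constant hidden inside $\Omega(\cdot)$, delivers the stated bound $2^{-\Omega(\alpha^2 T \mu)}$.

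The only mildly nontrivial step is the Taylor verification that $(1 \pm \alpha)\ln(1 \pm \alpha) \mp \alpha = \Omega(\alpha^2)$ on $(0,1)$; everything else is textbook Cramér--Chernoff, so I expect no real obstacle. (Alternatively, one could cite any standard source for the multiplicative Chernoff bound and absorb constants into the $\Omega(\cdot)$ notation.)
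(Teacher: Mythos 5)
Your derivation is the standard Cram\'er--Chernoff argument for bounded (not just Bernoulli) variables, and every step checks out, including the convexity bound $\E[e^{\lambda X_i}] \le 1+\mu_i(e^\lambda-1)$ and the elementary inequalities $(1+\alpha)\ln(1+\alpha)-\alpha \ge \alpha^2/3$ and $(1-\alpha)\ln(1-\alpha)+\alpha \ge \alpha^2/2$ on $(0,1)$. The paper itself states this theorem as a standard fact and gives no proof, so there is nothing to compare against; your write-up is a correct and complete proof of the quoted bound.
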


\subsection{Pretty Good Measurement}
The pretty good measurement (PGM) is defined as follows:
\begin{defn}[pretty good measurement]\label{def:pgm}
Let $\{\sigma_i\}$ be a set of density matrices and $ \{p_i\}$ a probability distribution over $\{\sigma_i\}$. Define 
\ba
A_i=p_i \sigma_i,\, A=\sum_i A_i.
\ea
The PGM associated with $\{\sigma_i\}, \{p_i\}$ is the measurement $\{E_i\}$ with 
\ba
E_i = A^{-1/2} A_i A^{-1/2}.
\ea
\end{defn}

\section{Problem Definitions}
In this section we describe the PAC model of learning quantum channel and approximate state discrimination.

\subsection{Classical PAC Learning Model}
We start with a review of the classical PAC learning model. 

In the classical probably approximately correct (PAC) learning model, a learner tries to learn a \emph{target concept} $c^* \in C$ from a known \emph{concept class} $C$, which is a set of Boolean functions $c: \zo^n \rightarrow \zo$, with respect to an \emph{unknown} distribution $D$ over the input domain $\zo^n$. Specifically, the learner is given 
access to a sample oracle ${\cal O}_{c^*, D}$, which generates i.i.d. samples $(x_i, c^*(x_i))$, where each $x_i \leftarrow D$ is drawn according to the distribution $D$, and outputs a \emph{hypothesis} $h \in C$.\footnote{The requirement that the hypothesis $h$ is in the concept class $C$ is referred to as proper learning. We focus on proper learning since our algorithms satisfy this property.} 
The distance between two concepts $c$ and $h$ under the distribution $D$ is defined as $\Delta_{D}(c,h)=\mathbb{E}_{x\sim D} \L|c(x)-h(x)\R|$. The goal of the learner is to find a hypothesis $h$ with sufficiently small distance $\Delta_{D}(c^*,h)$ to $c^*$.

A learning algorithm $A$ is a  \emph{proper $(\eps,\delta)$-PAC learner} for a concept class $C$ if the following holds: For every $c^* \in C$ and distribution $D$, given oracle access to ${\cal O}_{c^*, D}$, $A^{{\cal O}_{c^*, D}}$ outputs an $h \in C$ such that $\Delta_{D}(c^*,h) \leq \eps$ with probability at least $1-\delta$. 
The sample complexity of $A$ is the maximum number of samples $T$ that $A$ needs to query ${\cal O}_{c^*, D}$ to output $h$. The  \emph{proper $(\eps,\delta)$-PAC sample complexity} of a concept class $C$ is the minimum sample complexity over all learners.
%
%
A  $\tilde{\Theta}((\log|C|)/\eps)$ sample complexity is known~\cite{Kearns:1994:ICL:200548,hanneke2016optimal}.\footnote{We use $\tilde{\Theta}$ to denote $\Theta$ with log factors. The classical results show that the sample complexity is $\Theta\L((d+\log  1/\delta)/\eps \R)$, where $d$ is the VC dimension of the concept class. In the case where $|C|$ is finite, $\log|C|$ is a trivial upper bound on $d$, and there are concept classes whose VC dimension $d$ matches $\log|C|$.}


\subsection{Learning Quantum Channels in PAC model}


We now generalize classical PAC learning to the context of learning quantum channels. As above, we consider a learner trying to learn a target concept $c^* \in C$ from a known concept class $C$ with respect to an unknown distribution $D$. Here, we consider the concept class $C$ as a finite set of known $d_1$ to $d_2$ dimensional quantum channels, and $D$ as a  distribution over the Hilbert space of dimension $d_1$.
Precisely, the learner is given 
access to a sample oracle ${\cal O}_{c^*, D}$ and outputs a hypothesis $h\in C$.
The oracle ${\cal O}_{c^*, D}$ generates i.i.d. samples $(x_i, c^*(x_i))$, where each $x_i \leftarrow D$ is the classical description of a state drawn according to the distribution $D$, and $c^*(x_i)$ is the (potentially mixed) quantum state outputted by $c^*$ on input $x_i$.

The distance between two concepts $c$ and $h$ under the distribution $D$ is the expected trace distance $\Delta(c, h)= \mathbb{E}_{x\in D} \L[ \Delta_{tr}(c(x) ,h(x))  \R]$. The goal of the learner is to find a hypothesis $h \in  C$ with sufficiently small  $\Delta(c^*,h)$.

 A quantum learning algorithm $A$ is a \emph{proper
 $(\eps,\delta)$-PAC learner} for  $C$ if the following holds: For every $c^* \in C$ and distribution $D$, given oracle access to ${\cal O}_{c^*, D}$, $A^{{\cal O}_{c^*, D}}$ outputs an $h\in C$ such that $\Delta_{D}(c^*,h) \leq \eps$ with probability at least $1-\delta$. The sample complexity of $A$ is the maximum number of samples $T$ that $A$ needs to query ${\cal O}_{c^*, D}$ to output $h$. The  \emph{proper $(\eps,\delta)$-PAC sample complexity} of a concept class $C$ is the minimum sample complexity over all learners.

\subsection{Approximate State Discrimination}
 Let $S$ be a finite set of $d$-dimensional density matrices. We want to learn a target state $\sigma \in S$ using as few identical copies of $\sigma$ as possible. A quantum algorithm is an $(\eps,\delta)$-approximate discriminator of $S$ if it takes the description of $S$ and $T$ copies of $\sigma$ as input and with probability $1-\delta$ outputs a state $\rho \in S$ with $\Delta_{tr}(\rho,\sigma) \le \eps$, for any $\sigma \in S$. 

Note that approximate state discrimination can be viewed as a special case of PAC learning quantum channels with constant output, so the algorithms for PAC learning quantum channels in Section~\ref{sec:pure} and Section~\ref{sec:mixed} trivially works for approximate state discrimination.

\section{PAC Learning Quantum Channels with Pure State Output}\label{sec:pure}

The algorithm follows ideas by Sen~\cite{pranab}, who shows that random orthonormal measurement preserves trace distance between pure states. One can then apply random orthonormal measurements on each sampled output and take enough samples to amplify the distance between $\eps$-far concepts to  $1- O({1}/{ | C |})$ and show that the probability for the maximum likelihood estimate to select a $\eps$-far concept over the target concept is less than  $O({1}/{ | C |})$. Take a union bound and we have a bounded error probability.

\begin{thm} \label{thm:pure}
Algorithm \ref{alg:pure} is a proper $(\eps, \delta)$-PAC learner for any concept class $C$ of  quantum channels with pure state outputs, using $$O\L(\frac{(\log|C|)+\log(1/ \delta)} { \eps^2}\R)$$ samples.
\end{thm}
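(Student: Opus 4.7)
The plan is to analyze a maximum-likelihood estimator (MLE) applied to the outcomes of independent Haar-random orthonormal measurements, one measurement per sampled output. The two essential ingredients are Sen's theorem (Theorem~\ref{lem:random-measure}), which guarantees that with high probability over a Haar-random orthonormal basis the induced total variation distance between two pure-state outcome distributions is at least a constant multiple of the states' trace distance, and a standard Bhattacharyya-type bound on the pairwise MLE error for product distributions.

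Concretely, the algorithm will draw $T=\Theta((\log|C|+\log(1/\delta))/\eps^2)$ samples $(x_i,\ket{c^*(x_i)})$, pick independent Haar-random orthonormal bases $M_1,\dots,M_T$, measure the $i$-th copy in basis $M_i$ to obtain an outcome $j_i$, and output the hypothesis $h\in C$ maximizing the likelihood $L(h)=\prod_i |\bra{j_i}M_i\ket{h(x_i)}|^2$. Let $p_i^c$ denote the outcome distribution when $\ket{c(x_i)}$ is measured in $M_i$. A Hellinger/Bhattacharyya calculation on the product distribution $\prod_i p_i^{c^*}$ gives, for every $h\in C$ and every fixing of $\vec x,\vec M$,
\[\Pr\L[L(h)\ge L(c^*)\,\big|\,\vec x,\vec M\R]\;\le\;\prod_{i=1}^T F\L(p_i^{c^*},p_i^h\R),\]
where $F$ denotes classical fidelity.

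It remains to bound the expectation of $\prod_i F(p_i^{c^*},p_i^h)$ over $\vec x,\vec M$ for any $\eps$-far $h$. For each $i$, Theorem~\ref{lem:random-measure} yields, with high probability over $M_i$, $\Delta_{tr}(p_i^{c^*},p_i^h)\ge \Omega(\Delta_{tr}(c^*(x_i),h(x_i)))$, and since $F\le\sqrt{1-\Delta_{tr}^2}\le 1-\Delta_{tr}^2/2$, we obtain $F(p_i^{c^*},p_i^h)\le 1-\Omega(\Delta_{tr}(c^*(x_i),h(x_i))^2)$ on the good event and $F\le 1$ on the (low-probability) bad event. Averaging first over $M_i$ and then over $x_i\sim D$, and using Jensen's inequality $\E_{x_i}[\Delta_{tr}(c^*(x_i),h(x_i))^2]\ge(\E_{x_i}[\Delta_{tr}(c^*(x_i),h(x_i))])^2\ge\eps^2$, we get $\E[F(p_i^{c^*},p_i^h)]\le 1-\Omega(\eps^2)$. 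Independence of the $T$ samples then gives $\E[\prod_i F(p_i^{c^*},p_i^h)]\le\exp(-\Omega(\eps^2 T))$. Since this quantity upper-bounds $\Pr[L(h)\ge L(c^*)]$ by the Bhattacharyya bound above, a union bound over the at most $|C|$ hypotheses that are $\eps$-far from $c^*$ shows that the algorithm outputs an $\eps$-far hypothesis with probability at most $|C|\exp(-\Omega(\eps^2 T))\le\delta$ for our choice of $T$.

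The main technical obstacle is composing Sen's per-pair guarantee across all $T$ samples while juggling three independent sources of randomness: the inputs $x_i\sim D$, the bases $M_i$, and the outcomes $j_i$. In particular, Theorem~\ref{lem:random-measure} only holds with high probability over the basis choice, and its failure event must be absorbed into the averaged fidelity $\E[F(p_i^{c^*},p_i^h)]$ without destroying the $1-\Omega(\eps^2)$ gap; this forces the constant in Sen's theorem to be bounded away from zero uniformly in $i$ and in the relevant output dimension. A secondary subtlety is that Sen's theorem is informative only once the output dimension exceeds a fixed threshold; below that threshold one can invoke a finite-dimensional analogue or tensor in an ancilla before measuring, neither of which affects the asymptotic sample complexity.
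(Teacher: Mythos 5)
Your proposal is correct, and it follows the same skeleton as the paper's proof: the same maximum-likelihood estimator over independent Haar-random orthonormal measurements, Sen's theorem (Theorem~\ref{lem:random-measure}) as the per-sample workhorse, exponential decay of the pairwise confusion probability via multiplicativity of fidelity over independent samples, and a final union bound over the at most $|C|$ hypotheses that are $\eps$-far from $c^*$. Where you genuinely diverge is in how the per-sample guarantee is composed into a pairwise error bound. The paper first applies a Chernoff bound over the draws $x_i\sim D$ to ensure $\sum_i \Delta_{tr}\L(M_i(c^*(x_i)),M_i(h(x_i))\R)=\Omega(T\eps)$ with high probability, then uses Cauchy--Schwarz and the fidelity product (Lemma~\ref{lem:trace-amp}) to push the \emph{joint} total-variation distance to $1-2^{-\Omega(T\eps^2)}$, and finally converts that into an MLE error bound via Lemma~\ref{lem:trace-to-estimate}; this route carries an extra $2^{-\Omega(T\eps)}$ failure term through the union bound. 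You instead bound the pairwise MLE error directly by the Bhattacharyya coefficient $\prod_i F(p_i^{c^*},p_i^h)$ conditioned on $(\vec{x},\vec{M})$, and then take expectations sample-by-sample, using Jensen to pass from $\E_{x_i}[\Delta_{tr}^2]$ to $(\E_{x_i}[\Delta_{tr}])^2\ge\eps^2$; independence then gives $\E[\prod_i F]\le e^{-\Omega(\eps^2 T)}$ with no auxiliary concentration step or extra failure event. Both arguments handle the small-dimension regime of Sen's theorem identically (padding with ancillas), and both land on the same $O((\log|C|+\log(1/\delta))/\eps^2)$ sample complexity; your composition is arguably the cleaner of the two. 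No gap.
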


\begin{algorithm}[H]
  Take $T=\Theta((\log|C|+\log(1/\delta))/\eps^2)$  samples $(x_1,\sigma_1),(x_2,\sigma_2),\dots,(x_T,\sigma_T)$ \; 
  Do a random orthonormal measurement\footnote{The measurement has $d_2$ outcomes, where $d_2$ is the dimension of output quantum state.} $M_i$ on each output state $\sigma_i$. Let the measured outputs be $\{ z_i \}$ \; \nllabel{step-rand-meas} 
  Output the concept $h\in C$ that is most likely to give the measured result of line~\ref{step-rand-meas}: 
\ba
h=\argmax_{c \in C} \Pi_{i\in[T]} \Pr[M_i(c(x_i))=z_i] \nn
\ea 
\caption{algorithm for pure state output]}
\end{algorithm}\label{alg:pure}

We need the following theorem to prove the correctness of Algorithm \ref{alg:pure}. First we state the result 1 of~\cite{pranab} (lemma 4 of arxiv version): 
\begin{thm}[random orthonormal measurement~\cite{pranab}]\label{lem:random-measure}
Let $\sigma_1,\, \sigma_2$ be two density matrices in $\mathbb C^d$. Define $r:=\rank(\sigma_1-\sigma_2) $. There exists a universal constant $k>0$ such that if $r<k\sqrt{d}$  then with probability at least $1-\exp(-kd/r)$ over the choice of a random orthonormal measurement basis $M$ in $\mathbb{C}^d$, $\norm{M(\sigma_1)-M(\sigma_2)}_1 > k\norm{\sigma_1-\sigma_2}_F$. \footnote{Recall that $M(\sigma)$ is the output distribution of the measurement $M$ on state $\sigma$. }
\end{thm}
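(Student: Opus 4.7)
The plan is to analyze the random variable
$$f(U) := \|M(\sigma_1)-M(\sigma_2)\|_1 = \sum_{i=1}^d \bigl|\langle i|U^\dagger \Delta U|i\rangle\bigr|,$$
where $\Delta = \sigma_1-\sigma_2$ is a traceless Hermitian matrix of rank $r$ and $U$ is a Haar-random unitary whose columns form the random orthonormal basis. I would execute the standard two-step moment-plus-concentration strategy: first show that $\mathbb{E}_U[f(U)] = \Omega(\|\Delta\|_F)$, and then use concentration of measure on the unitary group $U(d)$ to conclude $f(U) \geq \mathbb{E}[f(U)]/2$ except with probability $\exp(-\Omega(d/r))$.

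For the expected value, each column $e_i = U|i\rangle$ is marginally Haar-uniform on the sphere, so $\mathbb{E}[f(U)] = d \cdot \mathbb{E}_e\bigl[|e^\dagger \Delta e|\bigr]$. The second moment is easy: $\mathbb{E}[(e^\dagger \Delta e)^2] = \|\Delta\|_F^2/(d(d+1))$ by the standard formula for bilinear forms on the unit sphere, using $\tr \Delta = 0$. To pass from the second moment to the first absolute moment via Paley--Zygmund, it suffices to bound the fourth moment by a constant multiple of the square of the second. I would do this by switching to the Gaussian representation $e = g/\|g\|$ with $g$ a standard complex Gaussian, invoking rotational invariance to diagonalize $\Delta$, and computing $\mathbb{E}\bigl[(g^\dagger \Delta g)^4\bigr]$ explicitly for the iid shifted-exponential random variables $|g_j|^2 - 1$. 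A direct expansion gives $\mathbb{E}[(g^\dagger \Delta g)^4] \leq 9\|\Delta\|_F^4$, so Paley--Zygmund yields $\mathbb{E}[|e^\dagger\Delta e|] \geq c\|\Delta\|_F/d$ and hence $\mathbb{E}[f(U)] \geq c\|\Delta\|_F$ for an absolute constant $c$.

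For concentration, I would establish that $f$ is Lipschitz on $U(d)$ in the Hilbert--Schmidt metric with constant $L = O(\sqrt{r}\,\|\Delta\|_F)$. The key observation is that extracting the diagonal and taking the $\ell_1$ norm is dominated by $\|\cdot\|_1$, so $|f(U)-f(V)| \leq \|U^\dagger\Delta U - V^\dagger\Delta V\|_1$; splitting this difference as $(U-V)^\dagger\Delta U + V^\dagger \Delta(U-V)$ and using $\|AB\|_1 \leq \|A\|_{\mathrm{op}}\|B\|_1$ together with $\|\Delta\|_1 \leq \sqrt{r}\,\|\Delta\|_F$ gives the claimed bound. Gromov--Milman concentration on $U(d)$ then yields
$$\Pr\bigl[f(U) \leq \mathbb{E}[f(U)]/2\bigr] \leq 2\exp\!\bigl(-c'\, d\, \mathbb{E}[f]^2/L^2\bigr) = 2\exp(-c'' d/r),$$
and combining with the expectation bound gives $f(U) > (c/2)\|\Delta\|_F$ with the stated probability.

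The main obstacle I expect is pinning down the exact range of $r$ and $d$ for which the argument runs cleanly. In my sketch the concentration step is useful whenever $d/r$ is a sufficiently large constant, which is weaker than the $r < k\sqrt{d}$ hypothesis in the statement. I suspect the stronger restriction is needed either to control the error terms when passing from the Gaussian vector to the sphere-uniform one (where $\|g\|^2$ concentrates only up to $\Theta(\sqrt{d})$ fluctuations and the resulting correction to $\mathbb{E}[|e^\dagger\Delta e|]$ must be absorbed), or to ensure the Paley--Zygmund constant does not degrade for highly peaked eigenvalue profiles of $\Delta$. Working out these second-order estimates carefully enough to recover Sen's precise constants is where I would expect to spend the real effort.
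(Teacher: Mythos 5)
The paper does not actually prove this statement: it is quoted verbatim from Sen~\cite{pranab} (Result~1, Lemma~4 of the arXiv version) and used as a black box, so there is no in-paper argument to compare yours against. That said, your outline is the standard expectation-plus-concentration proof for results of this type and is, to my knowledge, also the skeleton of Sen's own argument: lower-bound $\mathbb{E}_U[f(U)]$ by $\Omega(\|\Delta\|_F)$ via a second-versus-fourth moment comparison for a single Haar-random column, then apply Gromov--Milman concentration on $U(d)$ with Lipschitz constant $O(\sqrt{r}\,\|\Delta\|_F)$, which is exactly where the rank enters and which yields the $\exp(-\Omega(d/r))$ tail. Two remarks that make your sketch tighter. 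First, the sphere-to-Gaussian transfer you flag as a worry is exact at the level of moments: writing $e=g/\|g\|$, the direction and norm of a complex Gaussian vector are independent, so $\mathbb{E}[(e^\dagger\Delta e)^{2k}]=\mathbb{E}[(g^\dagger\Delta g)^{2k}]/\mathbb{E}[\|g\|^{4k}]$ with $\mathbb{E}[\|g\|^{4k}]=d(d+1)\cdots(d+2k-1)$, so the $\Theta(\sqrt{d})$ fluctuation of $\|g\|^2$ never has to be controlled separately. Second, in place of Paley--Zygmund you can use the H\"older-type bound $\mathbb{E}|X|\ge(\mathbb{E}[X^2])^{3/2}/(\mathbb{E}[X^4])^{1/2}$, which with $\mathbb{E}[X^2]=\|\Delta\|_F^2/(d(d+1))$ and $\mathbb{E}[X^4]=O(\|\Delta\|_F^4/d^4)$ gives $\mathbb{E}|e^\dagger\Delta e|=\Omega(\|\Delta\|_F/d)$ with no rank hypothesis, consistent with your observation that $r<k\sqrt d$ is not forced by this route and can be folded into the choice of the single constant $k$. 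The one step you should write out in full is the chain $|f(U)-f(V)|\le\sum_i|(U^\dagger\Delta U-V^\dagger\Delta V)_{ii}|\le\|U^\dagger\Delta U-V^\dagger\Delta V\|_1\le 2\|U-V\|_{F}\,\|\Delta\|_1\le 2\sqrt{r}\,\|\Delta\|_F\|U-V\|_F$, since that is the only place $r$ appears and the diagonal-extraction inequality $\sum_i|A_{ii}|\le\|A\|_1$ deserves a line of justification; with that in place I see no gap in the plan.
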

Note that if $\sigma_1,\, \sigma_2$ are pure states,  $r<2<k\sqrt{n}$ for large enough $n$ and $\norm{\sigma_1-\sigma_2}_1\le \sqrt{2}\norm{\sigma_1-\sigma_2}_F$ so that  $\Delta_{tr}(M(\sigma_1),M(\sigma_2)) > k/\sqrt{2}\Delta_{tr}(\sigma_1,\sigma_2)$.

The following lemma shows how trace distance of the measured result grows when we take multiple samples.
\begin{lem} [trace distance amplification]\label{lem:trace-amp}
Let $X_1,X_2,\dots,X_T$ be $T$ independent distributions and so are $Y_1,Y_2,\dots,Y_T$. Denote the joint distribution  $(X_1,X_2,\dots,X_T)$ as $X$ and $(Y_1,Y_2,\dots,Y_T)$ as $Y$. Suppose that 
\ba
\sum_i \Delta_{tr}(X_i,Y_i) =T \eps ,
\ea
then
\ba
 \Delta_{tr}(X,Y) \geq 1 - 2^{-\Omega(T \eps^2)}
\ea
\end{lem}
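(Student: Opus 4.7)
The plan is to bound $\Delta_{tr}(X,Y)$ from below by first passing through the fidelity (Bhattacharyya coefficient) of the joint distributions, then exploiting the fact that fidelity multiplicatively decomposes over independent coordinates while trace distance does not. The preliminaries give us both halves of the Fuchs--van de Graaf inequality, $1 - F \leq \Delta_{tr} \leq \sqrt{1 - F^2}$, so both directions are at hand.

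First I would write
\[
\Delta_{tr}(X, Y) \;\geq\; 1 - F(X, Y) \;=\; 1 - \prod_{i=1}^{T} F(X_i, Y_i),
\]
using that the Bhattacharyya coefficient $F(p,q) = \sum_x \sqrt{p(x) q(x)}$ factorizes across products (a direct computation on indicator sums). Applying the upper half of the inequality to each factor, $F(X_i, Y_i) \leq \sqrt{1 - \Delta_{tr}(X_i, Y_i)^2}$, and abbreviating $\eps_i := \Delta_{tr}(X_i, Y_i)$ so that $\sum_i \eps_i = T\eps$, I get
\[
F(X, Y) \;\leq\; \prod_{i=1}^{T} \sqrt{1 - \eps_i^2} \;\leq\; \exp\!\left(-\tfrac{1}{2}\sum_{i=1}^{T} \eps_i^2\right),
\]
where the last step uses $\ln(1-y) \leq -y$ for $y \in [0,1]$.

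The final step is to convert the $\ell^2$ sum into the $\ell^1$ sum that the hypothesis controls. Cauchy--Schwarz (equivalently the power-mean inequality) yields $\sum_i \eps_i^2 \geq (\sum_i \eps_i)^2 / T = T\eps^2$, so $F(X,Y) \leq \exp(-T\eps^2/2) = 2^{-\Omega(T\eps^2)}$, whence $\Delta_{tr}(X,Y) \geq 1 - 2^{-\Omega(T\eps^2)}$, as claimed. I do not expect any real obstacle here: the only point requiring care is to route the factorization through fidelity (total variation does not factorize over products) and to use Cauchy--Schwarz in the correct direction to turn a fixed $\ell^1$ budget $T\eps$ into a lower bound on $\sum \eps_i^2$. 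An alternative route via Hellinger distance ($H^2 = 1 - F$, which tensorizes as $1 - H^2(X,Y) = \prod_i (1 - H^2(X_i,Y_i))$) or via a Chernoff bound on the log-likelihood ratio would give the same exponent, but the fidelity route is the most direct and lands precisely on the stated $2^{-\Omega(T\eps^2)}$.
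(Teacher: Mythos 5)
Your proof is correct and follows essentially the same route as the paper's: multiplicativity of fidelity over independent coordinates, both halves of the Fuchs--van de Graaf inequality, the bound $1-y\leq e^{-y}$, and Cauchy--Schwarz to convert the $\ell^1$ hypothesis $\sum_i \eps_i = T\eps$ into the lower bound $\sum_i \eps_i^2 \geq T\eps^2$. The only difference is cosmetic ordering of the steps.
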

\begin{proof}
By Cauchy-Schwarz inequality, 
\ba
\sum_i (\Delta_{tr}(X_i,Y_i))^2 \ge T \eps^2 ,
\ea
Then the joint fidelity is bounded by
\ba
F \L( X , Y \R) &=\Pi_i F \L( X_i,  Y_i \R) \nnl
\leq \Pi_i \sqrt{1-\L(\Delta_{tr}\L( X_i,Y_i\R) \R)^2} \nnl
\leq   \exp \L[-\fot \sum_i \L(\Delta_{tr}\L(X_i,Y_i \R) \R)^2\R] = 2^{-\Omega(T \eps^2)}, 
\ea
where the last inequality is true because  $1-x \leq e^{-x}$. And the joint  trace distance is 
\ba
 \Delta_{tr}\L(X,Y \R) \ge 1- F \L( X,Y\R) =1 - 2^{-\Omega(T \eps^2)}.
\ea
\end{proof}

The following lemma analyzes the effectiveness of maximum likelihood estimate.
\begin{lem}\label{lem:trace-to-estimate}
For any two distributions $D,D^*$ have total variation distance  $\alpha$,  $ \Pr_{i \sim D^*}( D(i) \leq D^*(i)) \ge \alpha $ 
\end{lem}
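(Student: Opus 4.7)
The plan is to use the standard characterization of total variation distance in terms of the set on which one distribution dominates the other. Specifically, I would define
\[
A = \{i : D(i) \leq D^*(i)\},
\]
so that the event whose probability we want to lower bound is precisely $\{i \in A\}$ under $D^*$, i.e.\ $\Pr_{i \sim D^*}(D(i) \leq D^*(i)) = \sum_{i \in A} D^*(i)$.

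Next I would invoke the well-known identity that the total variation (trace) distance between two discrete distributions equals $\sum_{i \in A}(D^*(i) - D(i))$. This follows because $(D^*-D)$ sums to $0$ over all $i$ (both distributions are normalized), so its positive part and negative part have equal mass, each equal to $\tfrac{1}{2}\sum_i |D(i)-D^*(i)| = t$. In particular,
\[
\sum_{i \in A} D^*(i) \;=\; t \;+\; \sum_{i \in A} D(i) \;\geq\; t,
\]
since $\sum_{i \in A} D(i) \geq 0$. Combining with the first display gives the claimed inequality.

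This proof is essentially a one-line consequence of unpacking the definition of trace distance, so there is no real obstacle; the only thing to be careful about is to work with the nonstrict inequality in the definition of $A$ (matching the statement's $D(i) \leq D^*(i)$) so that the identity for $t$ applies on the correct side of the split.
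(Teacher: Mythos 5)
Your proposal is correct and is essentially identical to the paper's own proof: both define the set $A=\{i: D(i)\le D^*(i)\}$, use the identity $\sum_{i\in A}(D^*(i)-D(i))=t$ (the positive-part characterization of total variation distance), and conclude via the nonnegativity of $\sum_{i\in A}D(i)$. No gaps.
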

\begin{proof}

\ba
0 &\le   \sum_{i: D(i) \leq D^*(i)} D(i) \nnl
 = \sum_{i: D(i) \leq D^*(i)} D(i)-D^*(i)+  \sum_{i: D(i) \leq D^*(i)} D^*(i)   \nle
  \fot\L[\sum_{i: D(i) \leq D^*(i)} (D(i)-D^*(i))+ \sum_{i: D^*(i) \leq D(i)} (D^*(i)-D(i))\R]+ \sum_{i: D(i) \leq D^*(i)} D^*(i)  \nle
 -\alpha +  \Pr_{i \sim D^*}( D(i) \leq D^*(i)) \\ \nn
  \Rightarrow & \Pr_{i \sim D^*}( D(i) \leq D^*(i)) \ge \alpha 
\ea
The third line is true because $\sum_{i: D(i) \leq D^*(i)} (D(i)-D^*(i))= \sum_{i: D^*(i) \leq D(i)} (D^*(i)-D(i))$.
\end{proof}

We think $D^*$ as the correct distribution and $D$ is a distribution far away, with the total variation distance between them being $\alpha=1-\eps$. When we use maximum likelihood estimation to distinguish $D^*$ from $D$, Lemma~\ref{lem:trace-to-estimate} says that the probability of error is less than $\eps$. Now we are ready to prove theorem \ref{thm:pure}.
\begin{proof}
Let $c^*$ be the target concept, and $c$ a concept such that $\Delta(c^*, c) > \eps$.
Recall that we took $$T=\Theta\L(\frac{\log|C|+\log(1/ \delta)} { \eps^2}\R)$$ samples. 
For all $i\in [T]$, apply Theorem \ref{lem:random-measure} to the pair of states $(c^*(x_i), c(x_i))$, we get that with probability $1-\exp(-k d_2/2)$ over random orthonormal measurements $M_i$, 
\ba
\Delta_{tr}\L( M_i(c^*(x_i)), M_i(c(x_i)) \R) > k/\sqrt{2} \Delta_{tr}(c^*(x_i), c(x_i)),
\ea
 where $k$ is a universal constant. Since you can pad some ancilla states to increase $d_2$ without changing trace distances if $\exp(-kd_2/2)$ is not small enough, we ignore this term.  By Chernoff bound, with probability at least $1- 2^{-\Omega(T \eps)}$ over $\{x_i\}$ sampled from $D$, 
\ba
(1/T)\cdot  \sum_i \Delta_{tr}\L( M_i(c^*(x_i)), M_i(c(x_i)) \R) > (1/T)\cdot \sum_i k/\sqrt{2} \Delta_{tr}(c^*(x_i), c(x_i)) \ge \frac{k}{2 \sqrt{2}} \eps. 
\ea
So we can apply Lemma \ref{lem:trace-amp} to get  that with probability at least  $1- 2^{-\Omega(T \eps)}$,
\ba
 \L[ \Delta_{tr}\L(\{ M_i(c^*(x_i))\}, \{M_i(h(x_i))\} \R)\R] \geq 1 - 2^{-\Omega(T \eps^2)}.
\ea
Now, note that by Lemma \ref{lem:trace-to-estimate}, the probability that the maximal likelihood estimation (incorrectly) selects $c$ is at most $(2^{-\Omega(T\eps^2)} + 2^{-\Omega(T\eps)})$. By taking a union bound over all such $c$, we get 
\ba
\Pr[\Delta(c^*,h)>\eps] \leq  (2^{-\Omega(T\eps^2)} + 2^{-\Omega(T\eps)}) \cdot |C| \leq  \delta.
\ea
\end{proof}

\section{PAC Learning Quantum Channels with Mixed State Output}\label{sec:mixed}
The random orthonormal measurement approach in Section~\ref{sec:pure} does not work since two high dimensional mixed states with constant trace distance between them can have negligible Frobenius distance between them. Instead, We follow the intuitions detailed in Section~\ref{sec:intui}. We define the bichromatic state discrimination problem (BSD), solve BSD with PGM techniques , and build our learner algorithm with the BSD solver and a partition sub-algorithm.

  Before we show the algorithms for bicromatic state discrimination, let us first show that we can efficiently amplify the distance between concepts by taking samples. 
\begin{lem} [concept distance amplification]\label{lem:concept-amp-state}
Let $c$ be a quantum channel concept $\eps$-far from the target concept $c^*$. Let $\{x_1,x_2,\dots,x_T \}$ be $T$ inputs  drawn from the distribution $D$.  With probability $1-2^{-\Omega(T \eps)}$ over $\{x_i\}$ drawn, we have 
\ba
  F\L( \bigotimes_{i\in[T]} c(x_i),  \bigotimes_{i\in[T]} c^*(x_i) \R)  \leq 2^{-\Omega(T \eps^2)}
\ea
and
\ba
\Delta_{tr}\L( \bigotimes_{i\in[T]} c(x_i),  \bigotimes_{i\in[T]} c^*(x_i) \R)  \geq 1 - 2^{-\Omega(T \eps^2)}.
\ea
\end{lem}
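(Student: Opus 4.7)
The plan is to reduce the tensor‐product fidelity to a product of single‐sample fidelities, then control the size of each factor via the Fuchs--van de Graaf inequality $F \le \sqrt{1 - \Delta_{tr}^2}$, and finally turn the assumption $\Delta(c, c^*) \ge \eps$ into a high‐probability lower bound on the sum of squared trace distances via Chernoff plus Cauchy--Schwarz.

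First I would use the fact that fidelity is multiplicative across tensor products to write
\[
F\!\left(\bigotimes_{i\in[T]} c(x_i),\, \bigotimes_{i\in[T]} c^*(x_i)\right) \;=\; \prod_{i\in[T]} F\!\bigl(c(x_i), c^*(x_i)\bigr).
\]
Setting $t_i := \Delta_{tr}(c(x_i), c^*(x_i))$, the Fuchs--van de Graaf inequality gives $F(c(x_i),c^*(x_i)) \le \sqrt{1 - t_i^2}$, and then $1 - x \le e^{-x}$ applied to $x = t_i^2$ yields
\[
\prod_{i\in[T]} F\!\bigl(c(x_i), c^*(x_i)\bigr) \;\le\; \exp\!\left(-\tfrac{1}{2}\sum_{i\in[T]} t_i^2\right).
\]
This is essentially the same bookkeeping already used in the proof of Lemma~\ref{lem:trace-amp}.

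The new ingredient is that the $t_i$ are now random, drawn i.i.d.\ via $x_i \sim D$, and we only know their expectation $\mathbb{E}[t_i] = \Delta(c,c^*) \ge \eps$. Since each $t_i \in [0,1]$, the multiplicative Chernoff bound stated earlier gives
\[
\Pr\!\left[\, \frac{1}{T}\sum_{i\in[T]} t_i \;<\; \tfrac{\eps}{2} \,\right] \;\le\; 2^{-\Omega(T\eps)}.
\]
Conditioning on the complementary event (which occurs with probability $1 - 2^{-\Omega(T\eps)}$), Cauchy--Schwarz gives $\sum_i t_i^2 \ge \frac{1}{T}\bigl(\sum_i t_i\bigr)^2 \ge T\eps^2/4$. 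Plugging this into the product bound above yields $F(\bigotimes_i c(x_i), \bigotimes_i c^*(x_i)) \le 2^{-\Omega(T\eps^2)}$, establishing the fidelity claim.

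For the trace distance claim I would simply invoke the other half of Fuchs--van de Graaf, $\Delta_{tr} \ge 1 - F$, to conclude $\Delta_{tr}(\bigotimes_i c(x_i), \bigotimes_i c^*(x_i)) \ge 1 - 2^{-\Omega(T\eps^2)}$ on the same good event. There is no real obstacle here; the only mildly subtle point is that the Chernoff step is needed to justify passing from the assumption $\mathbb{E}[t_i] \ge \eps$ to a sample‐wise lower bound on $\sum_i t_i$ (and hence on $\sum_i t_i^2$), which explains the two different exponents $2^{-\Omega(T\eps)}$ (failure probability of the Chernoff event) and $2^{-\Omega(T\eps^2)}$ (fidelity decay on the good event) appearing in the statement.
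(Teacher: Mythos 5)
Your proposal is correct and follows essentially the same route as the paper's proof: a Chernoff bound to get $\sum_i t_i \ge T\eps/2$ with probability $1-2^{-\Omega(T\eps)}$, Cauchy--Schwarz to lower bound $\sum_i t_i^2$, multiplicativity of fidelity together with $F \le \sqrt{1-\Delta_{tr}^2}$ and $1-x \le e^{-x}$ for the fidelity decay, and $\Delta_{tr} \ge 1-F$ for the trace-distance claim. No gaps; the only difference is that you state the Chernoff step slightly more explicitly than the paper does.
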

\begin{proof}
By Chernoff bound, with probability $1-2^{-\Omega(T \eps)}$, 

\ba
\sum_i \Delta_{tr}\L( c(x_i),  c^*(x_i) \R) \geq \frac{1}{2} T \eps.
\ea
Then by Cauchy-Schwarz Inequality, 
\ba
\sum_i (\Delta_{tr}\L( c(x_i),  c^*(x_i) \R))^2 \geq \frac{1}{4} T \eps^2 .
\ea
Then the amplified fidelity is bounded by
\ba
F \L( \bigotimes_{i} c(x_i),  \bigotimes_{i} c^*(x_i) \R) &=\Pi_i F \L( c(x_i),  c^*(x_i) \R) \nnl
\leq \Pi_i \sqrt{1-\L(\Delta_{tr}\L( c(x_i),  c^*(x_i)\R) \R)^2} \nnl
\leq   \exp \L[-\fot \sum_i \L(\Delta_{tr}\L( c(x_i),  c^*(x_i)\R) \R)^2\R] = 2^{-\Omega(T \eps^2)}, 
\ea
where the last inequality is true because  $1-x \leq e^{-x}$. And the amplified trace distance is 
\ba
 \Delta_{tr}\L( \bigotimes_{i\in[T]} c(x_i),  \bigotimes_{i\in[T]} c^*(x_i) \R) \ge 1- F \L( \bigotimes_{i} c(x_i),  \bigotimes_{i} c^*(x_i) \R) =1 - 2^{-\Omega(T \eps^2)}.
\ea
\end{proof}

Lemma  \ref{lem:concept-amp-state} means that we can amplify the distance between tensor products of samples from quantum channels as efficiently as we do on samples of fixed quantum states. This means that PAC learning quantum channels is really similar to approximate state discrimination even in the mixed state case.

Now back to BSD. The bichromatic state discrimination problem (BSD) is defined as follows:
\begin{defn}[Bichromatic State Discrimination Problem (BSD)]
Given complete information of two sets of quantum states, $S_{yes}$ and $S_{no}$, with fidelity $F(S_{yes}, S_{no})\leq \eta$ and size $ S_{yes} \leq N$, $ S_{no} \leq N$, and one copy of an unknown quantum state $\sigma$, the goal is to decide whether $\sigma \in S_{yes}$ or $\sigma \in S_{no}$. We say a quantum algorithm solves $(\eta,N)$-BSD with error $\delta$ if for all $S_{yes}$ and $S_{no}$ such that $F(S_{yes}, S_{no})\leq \eta$, $ S_{yes} \leq N$, and $ S_{no} \leq N$, given complete information about $S_{yes}$ and $S_{no}$ and one copy of an unknown quantum state $\sigma$ as input to the algorithm,  the algorithm output and $yes/no$ answer satisfies the following two conditions:
\begin{enumerate}
    \item  If $\sigma \in S_{yes}$, the learner outputs $yes$ with probability $(1-\delta)$.
    \item If $\sigma \in S_{no}$, the learner outputs $no$ with probability $(1-\delta)$.
\end{enumerate}    
The learner can output anything if $\sigma$ does not come from either of the two sets.
\end{defn}

We show the existence of a BSD solver by first showing that PGM over $S_{yes}\cup S_{no}$ solves the "average case" BSD and then turn it into a "worst case" result by the minimax theorem.

First by slightly modifying a result of~\cite{pgm-bk} and~\cite{pgm-am}, We show that PGM can solve the "average case" BSD:
\begin{lem}[PGM for "average BSD"]\label{lem:bi-pgm}
Let $S_{yes},\, S_{no}$ be two sets of density matrices and $\{p_i\}$ be a probability distribution over $ S_{yes} \cup S_{no} $. \footnote{We will slightly abuse the notation and write $i \in S_{yes}$ or  $j \in S_{no}$ instead of $\sigma_i \in S_{yes}$ or  $\sigma_j \in S_{no}$.} The PGM on $S_{yes} \cup S_{no}, \{p_i\}$ satisfies
\ba
\sum_{i \in S_{yes}}\sum_{j\in S_{no} } \L[ p_i \Pr\L( PGM(\sigma_i)=j\R ) + p_j \Pr( PGM\L(\sigma_j)=i \R)\R] \le  \sum_{i \in S_{yes}}\sum_{j\in S_{no} } F(\sigma_i,\sigma_j).
\ea
\end{lem}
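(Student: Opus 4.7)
My plan is to adapt the Barnum--Knill-style analysis of PGM error to the bichromatic setting and then invoke an AM--GM step that uses the probability-distribution constraint on the $p_k$'s.

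Write $A_i = p_i\sigma_i$ and $A = \sum_k A_k$. Cyclicity of the trace gives
\[
p_i\Pr(PGM(\sigma_i)=j) = \tr(A^{-1/2}A_jA^{-1/2}A_i) = p_j\Pr(PGM(\sigma_j)=i) = \|\sqrt{A_j}\,A^{-1/2}\sqrt{A_i}\|_F^2,
\]
so each summand on the LHS of the lemma is twice this common quantity and it suffices to bound one direction. I would then establish the per-pair fidelity bound
\[
p_i\Pr(PGM(\sigma_i)=j) \;\le\; \sqrt{p_i p_j}\, F(\sigma_i,\sigma_j),
\]
which is essentially the per-pair statement underlying the aggregate Barnum--Knill bound of \cite{pgm-bk,pgm-am}. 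Its proof rests on the operator inequality $A \ge A_i+A_j$: in the commutative case, the scalar AM--GM $A(x) \ge 2\sqrt{A_i(x)A_j(x)}$ immediately bounds $\tr(A^{-1}A_jA_i)$ by $\tfrac{1}{2}\sqrt{p_ip_j}\,F(\sigma_i,\sigma_j)$, and the noncommutative case uses the analogous matrix/operator-mean inequalities underlying Barnum--Knill.

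Summing the per-pair bound over $(i,j) \in S_{yes}\times S_{no}$ and invoking the symmetry, the LHS of the lemma is at most $2\sum_{i,j}\sqrt{p_ip_j}\,F(\sigma_i,\sigma_j)$. Since $\{p_k\}$ is a probability distribution over $S_{yes}\cup S_{no}$, for any $i\ne j$ AM--GM gives $2\sqrt{p_ip_j} \le p_i+p_j \le \sum_k p_k = 1$, so the LHS is in turn at most $\sum_{i,j}F(\sigma_i,\sigma_j)$, which proves the lemma.

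The main obstacle is the per-pair matrix inequality in the second paragraph: the standard Barnum--Knill statement is only an aggregate bound on total PGM error and does not \emph{a priori} decompose into per-pair bounds, so one either reopens the proof of \cite{pgm-bk} to extract the pairwise version or invokes a separate operator-mean inequality. The commutative case is transparent from scalar AM--GM, but the noncommutative extension is where the technical subtlety lies.
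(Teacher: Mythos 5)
There is a genuine gap, and it is fatal: the per-pair inequality $p_i\Pr(PGM(\sigma_i)=j)\le\sqrt{p_ip_j}\,F(\sigma_i,\sigma_j)$ on which your argument rests is \emph{false} in the noncommutative case. Concretely, take $d=2$ with three states $\sigma_i=\proj{0}$, $\sigma_j=\proj{1}$, $\sigma_k=\proj{+}$ and $p_i=p_j=p_k=1/3$. Then $F(\sigma_i,\sigma_j)=0$, so your bound would force the PGM to never confuse $\sigma_i$ with $\sigma_j$; but $A=A_i+A_j+A_k=\begin{pmatrix}1/2 & 1/6\\ 1/6 & 1/2\end{pmatrix}$ gives $\bra{0}A^{-1/2}\ket{1}=\tfrac12\bigl((2/3)^{-1/2}-(1/3)^{-1/2}\bigr)\neq 0$, hence $p_i\Pr(PGM(\sigma_i)=j)=\tr(A^{-1/2}A_jA^{-1/2}A_i)=p_ip_j\,\bigl|\bra{0}A^{-1/2}\ket{1}\bigr|^2>0$. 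The PGM can confuse two orthogonal states with positive probability once a third state is present, and since the right-hand side is exactly zero here, no constant rescues the pairwise statement. Your commutative AM--GM argument is genuinely correct (there $A(x)\ge A_i(x)+A_j(x)$ pointwise), but the monotonicity you would need to lift it --- that $\norm{A_i^{1/2}BA_j^{1/2}}_F$ only decreases when $B=A^{-1/2}$ is replaced by the larger operator $(A_i+A_j)^{-1/2}$ --- fails, because off-diagonal matrix entries are not monotone under the positive semidefinite order. This is precisely why the Barnum--Knill bound is inherently an aggregate statement.

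The paper's proof avoids any pairwise decomposition. It applies a block-matrix inequality, $\norm{M_{1,2}}_2^2\le\frac12\norm{(M^2)_{1,2}}_1$ for a PSD matrix $M$ in a $2\times2$ \emph{block} partition, to the Gram-type matrix $N$ with blocks $N_{ij}=A_i^{1/2}A^{-1/2}A_j^{1/2}$, where the two blocks of the partition are the whole groups $S_{yes}$ and $S_{no}$ rather than individual states. The crucial point is that $(N^2)_{ij}=\sum_{k}A_i^{1/2}A^{-1/2}A_kA^{-1/2}A_j^{1/2}$ telescopes to $A_i^{1/2}A_j^{1/2}$, so the contributions of all intermediate states are absorbed \emph{before} the triangle inequality for the trace norm produces $\sum_{i,j}F(A_i,A_j)$; and the block inequality is valid only for $2\times2$ partitions (the paper explicitly notes it fails for general $m\times m$ partitions), which is exactly the bichromatic structure being exploited. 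Your symmetry observation via cyclicity of the trace and the final step $\sqrt{p_ip_j}\le1$ do match the paper, but the core of the argument must be this grouped $2\times2$ application, not a per-pair bound.
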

\begin{proof}
See appendix \ref{app:pgm}.
\end{proof}

 We can group together the outputs of the PGM in Lemma~\ref{lem:bi-pgm} and define a binary measurement  $\{E_{yes},E_{no}\}$, where $E_{yes}=\sum_{i\in S_{yes} } E_i$, $E_{no}=\sum_{i\in S_{no} }E_i$, and $\{E_i\}$ is the PGM. By Lemma~\ref{lem:bi-pgm}, the binary measurement solves "average BSD" with error probability at most $\sum_{i \in S_{yes}}\sum_{j\in S_{no} } F(\sigma_i,\sigma_j)$.\footnote{A careful reader might notice that since we only want a binary answer, we are essentially distinguishing the states $A_{yes}=\sum_{i \in S_{yes}} p_i \sigma_i$ and $A_{no}=\sum_{j \in S_{no}} p_j \sigma_j$, and thus the optimal error probability is characterized by trace distance between $A_{yes}$ and $A_{no}$. However, to our knowledge there is no inequality in the literature giving a \emph{lower bound} on trace distance between on linear combinations of density matrices, so actually, the other direction of the trace-distance characterization is the relevant one: Lemma~\ref{lem:bi-pgm} gives a new lower bound on $\Delta_{tr}(A_{yes},A_{no})$.}

 Since the upper bound on error is independent of the distribution $\{p_i\}$, minimax theorem guarantees the existence of a measurement that distinguishes between  $S_{yes}$ and $S_{no}$ for any distribution $\{p_i\}$ with error probability less than $ \sum_{i \in S_{yes}}\sum_{j\in S_{no} } F(\sigma_i,\sigma_j)$\footnote{This argument was used in~\cite{pgm-aram}}. In particular,  if $p_i=1$ for some $\sigma_i\in S_{yes}$, the probability of the minimax measurement mistaking $\sigma_i$ as something in $S_{no}$ is upper bounded by  $ \sum_{i \in S_{yes}}\sum_{j\in S_{no} } F(\sigma_i,\sigma_j)$, and vice versa.  We formalize this discussion as the following Theorem.

\begin{thm}[solver for BSD, Theorem~\ref{thm:bsd} restated] \label{lem:bi}
There exist an algorithm that solves $(\eta,N)$-BSD with error $\delta=N^2 \eta$

\end{thm}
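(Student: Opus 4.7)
My plan is to execute the strategy already sketched in the paragraphs preceding the theorem: combine the PGM error bound of Lemma~\ref{lem:bi-pgm} with a minimax argument to upgrade average-case guarantees into worst-case ones. The starting point is to fix an arbitrary prior $\{p_i\}$ over $S_{yes}\cup S_{no}$ and form the PGM $\{E_i\}$ on this ensemble. Grouping the outcomes by color, I would define the binary measurement $\{E_{yes},E_{no}\}$ via $E_{yes}=\sum_{i\in S_{yes}}E_i$ and $E_{no}=\sum_{i\in S_{no}}E_i$. By Lemma~\ref{lem:bi-pgm}, the average total probability of a wrong color under this prior is at most $\sum_{i\in S_{yes}}\sum_{j\in S_{no}}F(\sigma_i,\sigma_j)$, which using $F(\sigma_i,\sigma_j)\leq F(S_{yes},S_{no})\leq \eta$ and the size bounds $|S_{yes}|,|S_{no}|\leq N$ is at most $N^2\eta$.

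The next step is the minimax conversion. I would set up the two-player zero-sum game in which the algorithm designer picks a binary POVM $\{E_{yes},E_{no}\}$ on the relevant Hilbert space, the adversary picks a state $\sigma\in S_{yes}\cup S_{no}$, and the payoff is the probability of a wrong-color output. Both strategy sets are convex and compact (measurements form a compact convex set; the adversary's mixed strategies are distributions over the finite set $S_{yes}\cup S_{no}$), and the payoff is bilinear in the two strategies, so von Neumann's minimax theorem applies. The previous paragraph shows that for every adversary distribution $\{p_i\}$ there is an algorithm strategy (the grouped PGM built from that prior) with expected error at most $N^2\eta$; hence the value of the max-min is at most $N^2\eta$, and by minimax so is the min-max. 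Fixing an optimal measurement achieving the min-max and specializing to point-mass priors $p_i=1$ for each individual $\sigma_i$ gives that the error on every single input state is at most $N^2\eta$, which is exactly the required worst-case guarantee.

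The only substantive obstacle I anticipate is making the minimax step fully rigorous, in particular checking compactness of the measurement set and verifying that restricting the adversary to point masses yields the desired per-state bound (rather than just a bound on some convex combination). These are standard but need to be stated carefully; once they are in place, the bound $\delta\leq N^2\eta$ follows immediately from Lemma~\ref{lem:bi-pgm} and the hypothesis $F(S_{yes},S_{no})\leq\eta$. I would not need any new technical lemmas beyond those already established in the paper.
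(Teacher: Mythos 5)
Your proposal is correct and follows essentially the same route as the paper's proof: both set up the zero-sum game between a prior over $S_{yes}\cup S_{no}$ and a binary measurement, bound the max-min value by $N^2\eta$ via the color-grouped PGM of Lemma~\ref{lem:bi-pgm}, invoke the minimax theorem, and specialize to point-mass priors to obtain the worst-case guarantee. The compactness and bilinearity caveats you flag are exactly the points the paper treats informally (``it is easy to check that that strategies of both side are linear''), so filling them in would only make the argument more rigorous, not different.
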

\begin{proof}
Consider the zero sum game between two players where player1 choose a probability distribution $\{p_i\}$ over $S_{yes}\cup S_{no}$ and player2 choose a binary measurement strategy $M$. The score of player1 is given by the following error probability\footnote{We will slightly abuse the notation and write $i \in S_{yes}$ or  $j \in S_{no}$ instead of $\sigma_i \in S_{yes}$ or  $\sigma_j \in S_{no}$.}:
\ba
P_{bi-error} = \sum_{i \in S_{yes}} \L[ p_i \Pr\L( M(\sigma_i)=no\R ) \R] +\sum_{j\in S_{no} } \L[  p_j \Pr \L( M(\sigma_j)=yes \R)\R]  
\ea

 It is easy to check that that strategies of both sides are linear, so we can  apply the minimax theorem to get
 \ba
  \min_{M}  \max_{\{p_i\}} P_{bi-error} =\max_{\{p_i\}}  \min_{M}  P_{bi-error} \leq   \sum_{i \in S_{yes}}\sum_{j\in S_{no} } F(\sigma_i,\sigma_j) \leq N^2\eta,
 \ea 
 where the second inequality is from the promises of $(\eta,N)$ BSD, and the first inequality is shown by considering the binary measurement  $\{E_{yes},E_{no}\}$, where $E_{yes}=\sum_{i\in S_{yes} } E_i$, $E_{no}=\sum_{i\in S_{no} }E_i$, and $\{E_i\}$ is the PGM of Lemma~\ref{lem:bi-pgm}. This means that there is a measurement $M$ whose error probability is less than $\sum_{i \in S_{yes}}\sum_{j\in S_{no} } F(\sigma_i,\sigma_j)$ for all probability distribution $\{p_i\}$. In particular, the error probability is at most $N^2\eta$ when player1 uses the deterministic strategy of always choosing some specific state $\sigma_i \in S_{yes}\cup S_{no}$. Therefore,  algorithm of applying the measurement $M$ solves $(\eta,N)$-BSD with error $N^2\delta$.

%

\end{proof}

Theorem \ref{lem:bi} implies that if we amplify the maximum fidelity between $S_{yes}$ and $S_{no}$  by Lemma \ref{lem:concept-amp-state} to less than $O(1/|C|^2)$, we have a constant error probability in distinguishing whether a state is from $S_{yes}$ or $S_{no}$. By lemma \ref{lem:concept-amp-state} this requires $\Theta \L( \log |C| / \gamma^2 \R)$ samples if the distance between $S_{yes}$ and $S_{no}$ is $\gamma$.

Now we present the partition sub-algorithm.  Let $C_r$ be the set of remaining concepts that have not been cut off by the main algorithm. The sub-algorithm partitions the remaining concepts into three disjoint subsets: $(S_{yes}, S_{unknown},S_{no})$, such that  $|S_{yes}| \geq \frac{1}{9}|C_r|$\footnote{$\frac{1}{9}$ is an arbitrary constant and can be further optimized}, and $\Delta(S_{yes},S_{no})\geq \gamma = \Theta(\eps / \log|C_r|)$. The sub-algorithm might or might not found an extreme case. If no extreme case is found, $|S_{no}| \geq \frac{1}{9}|C_r|$.  If an extreme case  is found, more than $\frac{1}{3}|C_r|$ concepts are $\eps$-close to some concept. The sub-algorithm initialized with every concept in $S_{no}$. It then repeatedly picks a concept $c_c$ from $S_{no}$ and adds concepts within the ball around $c_c$ to $S_{yes}$ and concepts in a $\gamma$-shell around the ball to $S_{unknown}$. The $\gamma$-shell of $S_{no}$ ensures that $\Delta(S_{yes},S_{no}) \geq \gamma$ and we choose the radius of the ball so that the number of concepts added to $S_{yes}$ is greater than half the number of concepts added to $S_{unknown}$ to ensure that $|S_{yes}| > \frac{1}{2}|S_{unknown}|$ in the end. The sub-algorithm keeps adding concepts to $S_{yes}$ and $S_{unknown}$ until $|S_{yes}|+|S_{unknown}| > \frac{1}{3}|C_r|$ or the loop is breaked by an extreme case. The sub-algorithm reports an extreme case if the number of concepts to be added to $S_{yes}$ and $S_{unknown}$ in the current iteration is greater than $\frac{1}{3}|C_r|$. In this case we know that more than $\frac{1}{3}|C_r|$ concepts are around $c_c$. If no extreme case is found, since the loop stops when $|S_{yes}|+|S_{unknown}| > \frac{1}{3}|C_r|$ and the last iteration cannot add more than $\frac{1}{3}|C_r|$ concepts to $S_{yes}$ or $S_{unknown}$, there are at least $(1-\frac{1}{3}-\frac{1}{3})|C_r| > \frac{1}{9}|C_r|$ concepts left in $S_{no}$, and $|S_{yes}| > \frac{1}{3} (|S_{yes}|+|S_{unknown}|) > \frac{1}{9}|C_r|$.

There is another complication in that the distance between the concepts depends on the unknown distribution $D$ and thus cannot be calculated. In stead, we calculate the \emph{empirical distance}  between concepts, $\Delta_{emp}(c_1,c_2)=\frac{1}{T}\sum_{i=1}^T \L [ \Delta_{tr} \L(c_1(x_i),c_2(x_i)\R) \R ] $, which depends on the input points drawn from $D$. We also tune $\eps$ into $\eps/2$ to accommodate for the extra error incurred.

The sub-algorithm is detailed as follows:
\begin{algorithm}\caption{partition sub-algorithm}\label{alg:partition}
 \KwData{ concepts class $C_r$, real number $\eps$.}
 \KwResult{ Set of concepts $S_{yes},S_{unknown},S_{no}$, boolean variable $flag\_extreme$, concept $c_c$}
$S_{no} \leftarrow C_r, \, S_{yes} \leftarrow \emptyset,\, S_{unknown} \leftarrow \emptyset, \,flag\_extreme \leftarrow false,\, \gamma \leftarrow {\eps/(4\log|C_r|) }$.

\While{ $| S_{yes}|+ |S_{unknown}| < \frac{1}{3} |C_r|$\footnote{The loop might also be broken by an extreme case}}{

$c_c \leftarrow $  a random concept in $S_{no}$\;
  Count the number of concept in $S_{no}$ whose distance to $c_c$ is in the interval $\L [(m-1) \gamma, m \gamma\R )$ for all $m \in [1/\gamma]$ and record the number as $b_m$. I.e.  $b_m \leftarrow \L| \{c|\Delta(c,c_c)\in \L [(m-1) \gamma, m \gamma\R ), c \in S_{no}   \}\R|$\;\nllabel{step-bin}

  Find the smallest $i^* \geq2$ such that $ b_{i^*} < 2 \sum_{i \in [i^*-1]} {b_i}  $\;\nllabel{step-yes-unknown-ratio} 
 \If{ $ \sum_{i\in [i^*-1]} {b_i} + b_{i^*} > \frac{1}{3}|C_r|$} { $flag\_extreme \leftarrow true$\; \nllabel{step-extreme}

move everything in $S_{yes}$ and $S_{unknown}$ back to $S_{no}$\;

run line~\ref{step-add-yes-unknown} once\;

Terminate\;}

   For the concepts in $S_{no}$, move the concepts within distance $(i^*-1)\gamma$ of $c_c$ to $S_{yes}$, and move the concepts whose distance to $c_c$ is in $\L [(i^*-1) \gamma, i^* \gamma\R )$ to $S_{unknown}$. I.e. move $ \{c|\Delta(c,c_c)\in \L [0, (i^*-1) \gamma\R ), c \in S_{no}   \}$ to $S_{yes}$ and move $ \{c|\Delta(c,c_c)\in \L [(i^*-1) \gamma, i^* \gamma\R ), c \in S_{no}   \}$ to $S_{unknown}$\;\nllabel{step-add-yes-unknown}

}
\end{algorithm}

\begin{lem}\label{lem:partition}
The output of Algorithm~\ref{alg:partition} satisfies the following conditions:  $(S_{yes},S_{unknown},S_{no})$ is a partition of $C_r$. $\Delta_{emp}(S_{yes},S_{no}) \geq \gamma=\eps/4\log |C_r|$. $|S_{yes}| \geq \frac{1}{9}|C_r|$. If $flag\_extreme=false$, $|S_{no}| \geq \frac{1}{9} |C_r|$. If $flag\_extreme = true$, $ \Delta_{emp}(c,c_c)\leq \eps/2,\, \forall c \in (S_{yes} \cup S_{unknown})$.
\end{lem}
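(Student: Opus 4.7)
The plan is to verify the five conclusions by tracking how the three sets evolve across the iterations of the loop. I treat the algorithm uniformly under the convention that the extreme branch first undoes previous relocations (by moving everything back to $S_{no}$) and then executes Step~\ref{step-add-yes-unknown} once more as the final iteration. Under this convention, the partition property is immediate by induction: $S_{no}$ starts as $C_r$, and Step~\ref{step-add-yes-unknown} only relocates concepts from $S_{no}$ into $S_{yes}$ or $S_{unknown}$, never duplicating or deleting any concept.

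For the separation $\Delta(S_{yes},S_{no}) \geq \gamma$, I would fix any final $c_{yes} \in S_{yes}$ and $c_{no} \in S_{no}$, and trace back to the iteration in which $c_{yes}$ left $S_{no}$, with center $c_c$ and index $i^*$. By construction $\Delta(c_{yes},c_c) < (i^*-1)\gamma$, while $\Delta(c_{no},c_c) \geq i^*\gamma$ since $c_{no}$ was present in $S_{no}$ at that iteration yet not relocated. The triangle inequality then gives $\Delta(c_{yes},c_{no}) \geq \gamma = \Theta(\eps/\log|C_r|)$.

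For $|S_{yes}| \geq |C_r|/9$, the key input is the minimality condition in Step~\ref{step-yes-unknown-ratio}, which guarantees $b_{i^*} < 2\sum_{i<i^*} b_i$ in every call. In the non-extreme case, each iteration adds $\sum_{i<i^*} b_i$ to $S_{yes}$ and $b_{i^*}$ to $S_{unknown}$, so summing over iterations gives $|S_{yes}| > \tfrac{1}{3}(|S_{yes}|+|S_{unknown}|) > |C_r|/9$, where the last inequality uses the loop's termination condition. In the extreme case, the trigger $\sum_{i<i^*} b_i + b_{i^*} > |C_r|/3$ combined with $b_{i^*} < 2\sum_{i<i^*} b_i$ already forces $\sum_{i<i^*} b_i > |C_r|/9$, and the final Step~\ref{step-add-yes-unknown} applied to all of $C_r$ can only enlarge $S_{yes}$ beyond this count. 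For the $|S_{no}| \geq |C_r|/9$ claim in the non-extreme case, the iteration preceding the terminating one satisfies $|S_{yes}|+|S_{unknown}| \leq |C_r|/3$, and the terminating iteration adds at most $|C_r|/3$ new concepts (otherwise the extreme flag would have fired), so $|S_{no}| \geq |C_r|/3$.

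The main obstacle is the last claim: when $flag\_extreme=\mathrm{true}$, every $c \in S_{yes} \cup S_{unknown}$ is at distance less than $i^*\gamma$ from $c_c$, so it suffices to bound $i^* \leq O(\log|C_r|)$. I plan to exploit the minimality of $i^*$: for all $2 \leq i < i^*$, the condition $b_i \geq 2\sum_{j<i} b_j$ implies the partial sums $B_i = \sum_{j \leq i} b_j$ grow by at least a factor of $3$, i.e.\ $B_i \geq 3 B_{i-1}$. The chosen $c_c$ sits in its own bin $1$ (since $\Delta(c_c,c_c) = 0 < \gamma$), so $B_1 \geq 1$, and therefore $B_{i^*-1} \geq 3^{i^*-2}$. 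Combined with the trivial upper bound $B_{i^*-1} \leq |C_r|$ this forces $i^* \leq \log_3 |C_r| + 2$, and plugging in $\gamma = \eps/(4\log|C_r|)$ yields $i^*\gamma \leq \eps$, completing the proof.
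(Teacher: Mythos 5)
Your proof is correct and follows essentially the same route as the paper's: the same induction for the partition property, the same triangle-inequality argument for the $\gamma$-separation, the same counting for the $\frac{1}{9}$ bounds in both the extreme and non-extreme branches, and the same geometric-growth argument bounding $i^*$ (you track partial sums growing by a factor of $3$ where the paper tracks individual bins doubling, but it is the same idea and gives the same $O(\log|C_r|)$ bound). The only cosmetic mismatch is that you establish $\Delta(S_{yes},S_{no}) \ge \gamma = \eps/(4\log|C_r|)$ rather than the lemma's stated $\eps/\log|C_r|$, but that factor-of-$4$ discrepancy is present in the paper's own proof as well and is immaterial to how the lemma is used.
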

\begin{proof}
First note that in line~\ref{step-yes-unknown-ratio}, $\gamma=\eps/(4 \log|C_r|)$ ensures that $i^*$ exists and $i^*  \leq \eps/(2\gamma)$. This can be proved by contradiction:  if $b_i^* \ge 2 \sum_{i\in [i^*-1]} {b_i}, \forall i^* \leq \eps/(2\gamma)$, then $b_i^*> 2 b_{i^*-1}, \forall i^* \leq \eps/(2 \gamma)$. Together with  $b_1\geq 1$ because $\Delta(c_c,c_c)=0$, we have $b_{\floor{\eps/2\gamma}} \geq 2\cdot 2^{\log |C_r|}b_1 \geq |C_r|$,  a contradiction. 

   $(S_{yes},S_{unknown},S_{no})$ is a partition because it is initialized as a partition and we only moves elements between them. Note that whenever we move something to $S_{yes}$, we move a $\gamma$-thick shell around it to $S_{unknown}$. By triangle inequality of empirical distances between concepts, $\Delta_{emp}(S_{yes},S_{no}) \geq \gamma=\eps/4\log |C_r|$ at the end of every step. 
   
   If no extreme case is found, at each iteration of the loop  at line~\ref{step-add-yes-unknown},  ($ \sum_{i\in [i^*-1]} {b_i})$ concepts are moved to $S_{yes}$ from $S_{no}$, and $  b_{i^*}$ concepts are moved to $S_{unknown}$ from $S_{no}$. Before the last iteration of the loop $|S_{yes}|+|S_{unknown}| \leq \frac{1}{3}|C_r|$, and the number of concepts moved to $S_{yes}$ and $S_{unknown}$ in the last iteration is $ \sum_{i\in [i^*-1]} {b_i} + b_i^* \leq \frac{1}{3}|C_r|$, so  $|S_{no}| \geq (1-\frac{1}{3}-\frac{1}{3})|C_r| > \frac{1}{9}|C_r|$.  Because of the requirement $ b_{i^*} < 2 \sum_{i \in [i^*-1]} {b_i}  $ in line~\ref{step-yes-unknown-ratio}, $ \sum_{i\in [i^*-1]} {b_i} >\frac{1}{3}( \sum_{i\in [i^*-1]} {b_i} + b_{i^*} ) $ and thus $|S_{yes}| >\frac{1}{3}(|S_{yes}|+|S_{unknown}| )$ at the end of every loop. Combined with the loop-termination condition $| S_{yes}|+ |S_{unknown}| > \frac{1}{3} |C_r|$, we have $|S_{yes}| > \frac{1}{9} |C_r|$.  
   
   If an extreme case is found at line~\ref{step-extreme}, because we moved everything back to $S_{no}$, all concepts in $S_{yes}$ or $S_{unknown}$ are added in that one call of line~\ref{step-add-yes-unknown}, and thus they are all $(i^* \gamma)$-close to $c_c$ . Recall that $i^* \gamma \leq \eps/2$,  so everything in $S_{yes}$ or $S_{unknown}$ is $\eps/2$-close to $c_c$. The analysis on $|S_{yes}|$ is a bit subtle. Similar to the previous paragraph, we have $\sum_{i\in [i^*-1]} {b_i} >\frac{1}{3}( \sum_{i\in [i^*-1]} {b_i} + b_{i^*} ) $.  Combined with  $ \sum_{i\in [i^*-1]} {b_i} + b_{i^*} > \frac{1}{3}|C_r|$ to trigger line~\ref{step-extreme}, we have $\sum_{i\in [i^*-1]} {b_i}  > \frac{1}{9}|C_r|$. Since the wiping of $S_{yes}$ and $S_{unknown}$ at the beginning of  line~\ref{step-extreme} only opens more possible concepts to be added to $S_{yes}$, we have $|S_{yes}| \geq \sum_{i\in [i^*-1]} {b_i}  > \frac{1}{9}|C_r|$.
\end{proof}

The main algorithm for mixed state case is then:
\begin{algorithm}
\caption{algorithm for mixed state case}\label{alg:mixed}
\KwData{Concept class $C$, Sampling Oracle ${\cal O}_{c^*, D}$}
\KwResult{hypothesis $h$}
 $C_r \leftarrow C$ \;
 $T \leftarrow \Theta \L(\frac{\log^2 |C|(\log |C|+\log(1/ \delta))} { \eps^2}\R)$
 \;
 
\While{}{

 Call ${\cal O}_{c^*, D}$ $T$ times, getting $T$ samples $\{(x_1,c^*(x_1)),(x_2,c^*(x_2)),\dots\,(x_T,c^*(x_T))\}$\;

 $(S_{yes}, S_{unknown}, S_{no}, flag\_extreme, c_c) \leftarrow$  (Algorithm~\ref{alg:partition})$(C_r,\,\eps)$\;
 

Construct the measurement $M$ in Theorem~\ref{lem:bi} between $S_{yes}$ and $S_{no}$ with the state $\sigma_i$ corresponding to concept $c_i$ being $\sigma_i= \bigotimes_{j\in[T]} c_i(x_j)$\;\nllabel{step-measure}

$Measure\_result \leftarrow  M(\bigotimes_{j\in[T]} c^*(x_j))$.\;

\If{ $Measure\_result$ = $no$}{ remove $S_{yes}$ from $C_r$\;\nllabel{step-no}}

\If{$Measure\_result$ = $yes$ and $flag\_extreme=false$}{ remove $S_{no}$ from $C_r$.\;\nllabel{step-yes}}

\If{ $Measure\_result$ = $yes$ and $flag\_extreme=true$}{ $h \leftarrow c_c$\;\nllabel{step-extreme2-output}

Terminate;\nllabel{step-extreme2}}

}
\end{algorithm}

Now we state and prove our result for mixed state case:
\begin{thm} \label{thm:mixed}
Algorithm \ref{alg:mixed} is a proper $(\eps, \delta)$-PAC learner for any quantum circuit concept class $C$, using $$O\L(\frac{\log^3 |C|(\log |C|+\log(1/ \delta))} { \eps^2}\R)$$ samples.
\end{thm}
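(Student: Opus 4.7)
The plan is to show that each iteration of Algorithm~\ref{alg:mixed} either terminates correctly with a hypothesis $\eps$-close to $c^*$ or shrinks $|C_r|$ by a constant fraction, so that the loop finishes within $O(\log|C|)$ iterations. First, I invoke Lemma~\ref{lem:partition} on the current $C_r$ to obtain $(S_{yes},S_{unknown},S_{no})$ with $|S_{yes}|\geq |C_r|/9$ and $\Delta(S_{yes},S_{no})\geq \gamma := \eps/(4\log|C_r|)$; in the non-extreme case we additionally have $|S_{no}|\geq |C_r|/9$, while in the extreme case every concept in $S_{yes}\cup S_{unknown}$ is $\eps$-close to $c_c$.

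Next, I analyze the measurement in step~\ref{step-measure}. For every pair $(c_i,c_j)\in S_{yes}\times S_{no}$ we have $\Delta(c_i,c_j)\geq \gamma$, so Lemma~\ref{lem:concept-amp-state} gives $F(\bigotimes_k c_i(x_k),\bigotimes_k c_j(x_k))\leq 2^{-\Omega(T\gamma^2)}$ with probability $1-2^{-\Omega(T\gamma)}$ over the sampled inputs $\{x_k\}$. Union-bounding over the at most $|C|^2$ such pairs shows that with high probability the amplified maximum fidelity $\eta$ between the tensor-product states in $S_{yes}$ and $S_{no}$ is at most $2^{-\Omega(T\gamma^2)}$. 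Plugging this $\eta$ into Theorem~\ref{lem:bi} gives a BSD solver with per-iteration error $|C|^2\eta$. Setting $T=\Theta\!\bigl(\log^2|C|(\log|C|+\log(1/\delta))/\eps^2\bigr)$ ensures both the Lemma~\ref{lem:concept-amp-state} failure probability and the BSD error are at most $\delta/(2\log|C|)$.

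I then do a case analysis on $c^*$. If $c^*\in S_{yes}$: the BSD returns ``yes'' w.h.p.; in the non-extreme case step~\ref{step-yes} removes $S_{no}$ (shrinking $C_r$ by $\geq |C_r|/9$), and in the extreme case step~\ref{step-extreme2} outputs $c_c$, which is $\eps$-close to $c^*$ by Lemma~\ref{lem:partition}. If $c^*\in S_{no}$: BSD returns ``no'' w.h.p., and step~\ref{step-no} removes $S_{yes}$. If $c^*\in S_{unknown}$: the BSD outcome is unconstrained, but either branch is safe: removing $S_{yes}$ or $S_{no}$ does not discard $c^*$, and in the extreme case the hypothesis $c_c$ is $\eps$-close to every element of $S_{unknown}$, hence to $c^*$. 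Thus every iteration either correctly terminates or reduces $|C_r|$ by a factor of $8/9$; the loop exits once $|C_r|=1$ (in which case the remaining concept equals $c^*$ on all past measurements and may be output), requiring at most $O(\log|C|)$ iterations.

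Finally, a union bound over the $O(\log|C|)$ iterations gives total failure probability at most $\delta$, and the total sample complexity is $O(\log|C|)\cdot T = O\!\bigl(\log^3|C|(\log|C|+\log(1/\delta))/\eps^2\bigr)$, matching the statement. The main subtlety I expect is bookkeeping the two failure modes per iteration (the Chernoff-type deviation in Lemma~\ref{lem:concept-amp-state} and the $|C|^2\eta$ BSD error from Theorem~\ref{lem:bi}) simultaneously against the $O(\log|C|)$ union bound, and verifying that the case $c^*\in S_{unknown}$, where the BSD gives no guarantee, is harmless in both the extreme and non-extreme branches; the rest of the argument is a direct composition of Lemma~\ref{lem:partition}, Lemma~\ref{lem:concept-amp-state}, and Theorem~\ref{lem:bi}.
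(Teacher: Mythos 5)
Your proposal is correct and follows essentially the same route as the paper: partition $C_r$ via Lemma~\ref{lem:partition}, amplify the $S_{yes}$--$S_{no}$ separation with Lemma~\ref{lem:concept-amp-state}, apply the BSD solver of Theorem~\ref{lem:bi}, and union-bound the two per-iteration failure modes over the $O(\log|C|)$ iterations. Your explicit case analysis for $c^*\in S_{unknown}$ and the union bound over the $|C|^2$ pairs are slightly more detailed than the paper's write-up, but they make explicit exactly what the paper's argument relies on implicitly.
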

\begin{proof}

By Lemma~\ref{lem:partition}, Algorithm~\ref{alg:mixed} removes at least $\frac{1}{9}|C_r|$ concepts from $C_r$ in each loop unless it terminates, so it terminates in  $O(\log {| C |} )$ loops at line~\ref{step-extreme2}. Combined with the fact that Algorithm~\ref{alg:mixed} takes $O\L(\frac{\log^2 |C|(\log |C|+\log(1/ \delta))} { \eps^2}\R)$ samples each loop,  its sample complexity is $$O\L(\frac{\log^3 |C|(\log |C|+\log(1/ \delta))} { \eps^2}\R)$$.

As for the correctness of the algorithm, first note that by Lemma~\ref{lem:partition} the empirical distance between any pair of concepts in $S_{yes}$ and $S_{no}$ is at least $\gamma_0=\eps/(4 \log|C|)$.  


  Consider any pair of concepts $c_i \in S_{yes}$ and $C_j \in S_{no}$, with  the corresponding states $\sigma_i$ and $\sigma_j$. By definition of empirical distance, 
  \ba
  \sum_{k=1}^T \Delta_{tr} \L( c_i(x_k),  c_j(x_k) \R) \geq T\gamma_0
\ea

Then by Cauchy-Schwarz Inequality, 
\ba
\sum_{k=1}^T \Delta_{tr} \L( c_i(x_k),  c_j(x_k) \R)^2 \geq T\gamma_0^2 .
\ea

Then the  fidelity between $\sigma_i$ and $\sigma_j$ is bounded by
\ba \label{eq:f}
F(\sigma_i,\sigma_j) &=
F \L( \bigotimes_{k} c_i(x_k),  \bigotimes_{k} c_j(x_k) \R) \nn \\
&=\Pi_k F \L( c_i(x_k),  c_j(x_k) \R) \nnl
\leq \Pi_k \sqrt{1-\L(\Delta_{tr}\L( c_i(x_k),  c_j(x_k)\R) \R)^2} \nnl
\leq   \exp \L[-\fot \sum_k \L(\Delta_{tr}\L( c_i(x_k),  c_j(x_k)\R) \R)^2\R] \nnl
=2^{-\Omega\L(T \gamma_0^2 \R)}
\ea
where the last inequality is true because  $1-x \leq e^{-x}$.

There are  only two possible ways for Algorithm~\ref{alg:mixed}  to make an error: first is to remove  $c^*$ from $C_r$ in line~\ref{step-no} or line~\ref{step-yes}, and second is to output a far-away concept at line~\ref{step-extreme2} because of the mismatch between empirical distance and true distance.

For the first error, note that $c^*$ always has empirical distance zero to it self, no matter what $\{x_1,x_2,\dots,x_T\}$ are sampled. By Theorem~\ref{lem:bi} and Equation~\ref{eq:f} the error probability in each loop is bounded by
 \ba
 P_{error,1} \leq|C_r|^2 \cdot  2^{-\Omega(T\gamma_0^2)} .
 \ea


Apply union bound over $O(\log {| C |} )$ loop we can bound the total error probability by
\ba
P_{total \, error,1} \leq \log|C||C|^2 \cdot  2^{-\Omega(T\gamma^2)} \leq O\L(\frac{ \delta \, |C|^2 \log |C|}{\poly(|C|)} \R) \leq  O(\delta)
\ea

For the second error, consider a pair of concepts that has distance bigger than $\eps$. By Chernoff bound, the probability that their empirical distance is less than $\fot \eps$ is less than $2^{-\Omega(T\eps^2)}$. Union bound over all $O(|C|^2)$ pairs of concepts, we have
\ba
P_{total \, error,2} \leq |C|^2 \cdot  2^{-\Omega(T\eps^2)} \ll P_{total \, error,1}.
\ea

\end{proof}

\section{Lower Bounds}
In this section we describe two simple lower bounds. One is an  $\Omega((1-\delta)\ln|C|/\eps^2)/\ln(\ln |C|/\eps)$ lower bound on the sample complexity of approximate state discrimination for pure states, which in turn gives lower bounds on the sample complexity of PAC learning quantum channels. The other is an $\Omega(\sqrt{d})$ lower bound on the sample complexity of learning large dimensional \emph{classical   distribution} in the \emph{agnostic} model, which in turn lower bounds approximate state discrimination and PAC learning quantum state in the agnostic model~\cite{Haussler92,KSS94}.

\subsection{Lower Bound for Pure State Case}
\begin{thm}
 The sample complexity of $(\eps,\delta)$-approximate state discrimination on  a set $C$ of pure states is $\Omega((1-\delta)\ln|C|/\eps^2)/\ln(\ln |C|/\eps)$.
\end{thm}
\begin{proof}
This lower bound uses the $\eps$-packing-net construction of \cite{tomography-aram}. In Lemma 5 of the arxiv version of \cite{tomography-aram}, the authors showed the existence of a set $C$ of $d$-dimensional pure states with the following three properties: the distance between each state is at least $\eps$, the Holevo information $\chi_0$ for states uniformly drawn from the set is $O(\eps^2 \ln(d/\eps))$, and  $\ln |C| = \Omega(d)$.  With a simple reduction to communication protocol and Holevo theorem, \cite{tomography-aram} showed that to distinguish states in $C$ with probability $\delta$, $\frac{(1-\delta)\ln|C|-\ln 2}{\chi_0} = \Omega((1-\delta)\ln|C|/\eps^2)/\ln(\ln |C|/\eps)$ samples are required. Since every state in $C$ is $\eps$-far from each other, an $(\eps,\delta)$-approximate  discriminator should be able to distinguish each state in $C$ with probability $\delta$, therefore the discriminator must take $\Omega((1-\delta)\ln|C|/\eps^2)/\ln(\ln |C|/\eps)$ samples. This matches the sample complexity of our pure state algorithm in terms of $\eps$ and $|C|$ with some logarithmic factors.
\end{proof}
\begin{rem}
 Unfortunately, running the same argument with the mixed state $\eps$-packing nets of \cite{tomography-aram} does not give us tighter lower bound, so we don't have a matching lower bound for the mixed state case.
\end{rem}
\begin{cor}
 The proper $(\eps,\delta)$-PAC sample complexity of a concept class $C$ of pure states is $\Omega((1-\delta)\ln|C|/\eps^2)/\ln(\ln |C|/\eps)$.
\end{cor}

\subsection{Agnostic Model} \label{sec:agno}

Agnostic model~\cite{Haussler92,KSS94} is a learning model related to the PAC model. In agnostic model, the target concept does not need to come from the concept class. We formally  define the agnostic model for learning quantum channels as follows:

We consider a learner trying to learn a target concept $c^*$  with respect to an unknown distribution $D$. The learner is also given a concept class $C$. Since the target concept might not be in the concept class $C$, the learner tries the output the concept $c_{opt}$ that minimize the distance to the target concept $c^*$. Here, we consider the concept class $C$ as a finite set of known $d_1$ to $d_2$ dimensional quantum channels, and $D$ as a distribution over the Hilbert space of dimension $d_1$.
Precisely, the learner is given 
access to a sample oracle ${\cal O}_{c^*, D}$ and outputs a hypothesis $h\in C$.
The oracle ${\cal O}_{c^*, D}$ generates i.i.d. samples $(x_i, c^*(x_i))$, where each $x_i \leftarrow D$ is the classical description of a state drawn according to the distribution $D$, and $c^*(x_i)$ is the (potentially mixed) quantum state outputted by $c^*$ on input $x_i$.

The distance between two concepts $c$ and $h$ under the distribution $D$ is the expected trace distance to the target concept $\Delta(c, h)= \mathbb{E}_{x\in D} \L[ \Delta_{tr}(c(x) ,h(x))  \R]$. 
Let $c_{opt}$ be the optimal output, $c_{opt}=\arg \min \L[\Delta(c, c^*)|c\in C  \R].$ The goal of the learner is to find a hypothesis $h \in C$ with  $\Delta(c^*,h) \leq \Delta(c^*,c_{opt}) +\eps$.

 A quantum learning algorithm $A$ is a \emph{
 $(\eps,\delta)$-agnostic learner} for  $C$ if the following holds: For every $c^*$ and distribution $D$, given oracle access to ${\cal O}_{c^*, D}$, $A^{{\cal O}_{c^*, D}}$ outputs an $h\in C$ such that  $\Delta(c^*,h) \leq \Delta(c^*,c_{opt}) +\eps$ with probability at least $1-\delta$. The sample complexity of $A$ is the maximum number of samples $T$ that $A$ needs to query ${\cal O}_{c^*, D}$ to output $h$. The  \emph{ $(\eps,\delta)$-agnostic sample complexity} of a concept class $C$ is the minimum sample complexity over all learners. 

We show that there is no efficient quantum agnostic learner in the following theorem. 
\begin{thm}\label{thm:agno-lower}
For all $\eps<\frac{1}{10}$ and positive integer $d$, there exist a concept class $C$ of dimension 0 to $d$ whose $(\eps,\delta)$-agnostic sample complexity  is $\Omega(\sqrt{d})$.
\end{thm}

\begin{proof}
  We can get the $\Omega(\sqrt{d})$ lower bound with a simple concept class of that only has two concepts. Both of the concepts  are constant channels that output classical distributions. Consider distributions on $d+1$ dimensions $e_0, e_1, \dots, e_d$.  The first concept $C_1$ has all weight on $e_0$. The second concept $C_2$ has weight uniformly distributed over $e_1,\dots ,e_d$. Now consider the following two set of distribution to be learned. $D_1=\{D_{1,i}\}$ has weight $1/3$ on $e_0$ and weight $1/d$ on $2/3$ of dimensions $e_1 \dots,e_d$.  Anything in $D_1$ has distance $2/3$ to $C_1$ and distance $1/3$ to $C_2$, so it should be learned as $C_2$. $D_2=\{D_{2,i}\}$ has weight $1/3$ on $e_0$ and weight $100/d$ on $2/300$ of dimensions $e_1 \dots,e_d$. Anything in $D_2$ has distance $2/3$ to $C_1$ and distance $(1/3+99*2/300+1*(1-2/300))/2 \sim 0.993 $ to $C_2$, so it should be learned as $C_1$. However, $D_1$ and $D_2$ both looks pretty much like a uniform distribution on $e_1,\dots,e_d$. To distinguish them we need to see a collision on $e_1,\dots,e_d$. By a standard birthday bound, we need at least $\Omega(\sqrt{d/100})$ samples to see a collision. Therefore we need $\Omega(\sqrt{d})$ samples to learn classical distributions in agnostic model with constant error. In the regime of $|C|=\poly(d)$, the lower bound means that it's impossible to find an efficient algorithm of sample complexity $O(\polylog |C|)$.
\end{proof}

\begin{rem}
Note that the construction of Theorem~\ref{thm:agno-lower} is based on a classical distribution, so it means that agnostic learning of a classical distribution of many outputs efficiently is also impossible. To the knowledge of the authors, agnostic learning of classical distribution of many output has not been studied in the literature. Also note that classical distribution is not a subclass of pure quantum states, so Theorem~\ref{thm:agno-lower} does not rule out that quantum channel with \emph{pure} state outcomes can be efficiently agnostically learned.
\end{rem}

\begin{rem}
As mentioned in section \ref{sec:indep work}, \cite{buadescu2020improved} studied the problem of quantum hypothesis selection, which can be viewed as a a relax version of agnostic learning, outputting a state $h$ such that $\Delta(c^*,h) \leq 3.01 \Delta(c^*,c_{opt}) +\eps$. 
\end{rem}

\bibliographystyle{alpha}
\bibliography{KM-ref}

\appendix
\section{PGM for "average BSD"} \label{app:pgm}
This section is mostly identical to  Appendix E of~\cite{pgm-am}. \footnote{For those who have read~\cite{pgm-am},~\cite{pgm-am} upper bounds the probability of a state $i$ being mistaken as anything else, and we got our bound on the average probability of things in $S_{yes}$ being mistaken as in $S_{no}$ by changing the size of off-diagonal block matrix $Y$ from $1 \times r-1 $ to $ |S_{yes}| \times |S_{no}| $.}

The main ingredient of the proof is the following lemma (a slight improvement over Lemma 5 in~\cite{pgm-bk}, which lacked a factor of $\fot$). Let $M$ be a positive semidefinite $n \times n$ matrix, symmetrically partitioned as the $2 \times 2$ block matrix $M= \bpm X & Y \\ Y^* & Z \epm$ , where  $X$ is $n_1 \times n_1$, $Y$ is $n_1 \times n_2$ and $Z$ is $n_2 \times n_2$ (with $n=n_1+n_2$. Let $M^2$ be partitioned conformally. Then the off-diagonal blocks of $M$ and $M^2$ satisfy

\bea
\norm{M_{1,2}}^2_2 \leq \frac{1}{2} \norm{(M^2)_{1,2}}_1
 \eea

Note that the validity of this lemma does not extend to general $m\times m$ partitions.

\begin{proof}[Proof of lemma]
We have $M_{1,2}=Y$ and $(M^2)_{1,2}=XY+YZ$. Let us, without loss of generality, assume that $n_1\leq n_2$. From the singular value decomposition of $Y$ we can obtain a basis for representing $M$ in which $Y$ is pseudo-diagonal with non-negative diagonal elements. Let (for $i=1,\dots,n_1$) $x_i$ and $y_i$ be the diagonal elements of $X$ and $Y$, and $z_i$ the first $n_1$ diagonal elements of $Z$, all of which are non-negative. As $M$ is PSD, any of its principal submatrices is PSD too, and we have $y_i\leq \sqrt{x_i z_i} \leq (x_i+z_i)/2$. Thus 
\bea
\norm{Y}^2_2 =\sum^{n_1}_{i=1} y^2_i \leq \frac{1}{2} \sum^{n_1}_{i=1}(XY+YZ)_{i,i}\leq \frac{1}{2}\norm{XY+YZ}_1,
\eea
as required. The last inequality follows from the inequality $ | \tr A| \leq \norm{A}_1$ applied to the square matrix obtained by padding $XY+YZ$ with extra rows containing zero (an operation that does not affect the trace norm).
\end{proof}

To prove Lemma \ref{lem:bi-pgm}, recall that $A_i=p_i \sigma_i$ and let $ s=|S_{yes}| ,\, t= |S_{no}|,\text{and} \,r=s+t$.

 Let $W$ be the $r \times 1$ column block matrix $W:=(A^{1/2}_i A^{-1/4}_0)^r_{j=1}$ . Then $W^* W=\sum^r_{j=1} A^{-1/4}_0A_jA^{-1/4}_0=A^{1/2}_0$. Furthermore, Let $N =W W^*$. Then $N_{i,j}=A^{1/2}_i A_0^{-1/2} A_j^{1/2} $ and $(N^2)_{i,j}= (X X^* X X^*)_{i,j} = A_i^{1/2} A_j^{1/2}$.
   
   We sort the indices so that the first $s$ indices corresponds to $S_{yes}$ and the last $t$ indices correspond to $S_{no}$. We now apply the lemma to the $2\times 2$ block matrix $N=\bpm X &Y \\ Y^* &Z \epm$ where $X$ has $s \times s$ sub-blocks , $Y$ has $s \times t$ sub-blocks, and $Z$ has $t \times t$ sub-blocks, and each sub-block is the matrix $A^{1/2}_i A_0^{-1/2} A_j^{1/2}$. Thus, $N_{1,2}=Y$ is itself a block matrix consisting of the $s \times t$ sub-blocks $A^{1/2}_i A_0^{-1/2} A_j^{1/2}$ for $i \in S_{yes}$ and $j\in S_{no}$. Likewise, $(N^2)_{1,2}$ is a block matrix consisting of the $s \times t$ sub-blocks $A_i^{1/2} A_j^{1/2}$. The lemma then implies, 
   \ba
   \sum_{ i \in S_{yes},j\in S_{no}} \tr A_i A_0^{-1/2} A_j A_0^{-1/2}  &= \sum_{i,j} \norm{A^{1/2}_i A_0^{-1/2} A_j^{1/2}}^2_2=\norm{N_{1,2}}^2_2 \nnl
   \leq \frac{1}{2} \norm{(N^2)_{1,2}}_1 =\frac{1}{2}\norm{(A_i^{1/2} A_j^{1/2})_{ i \in S_{yes},j\in S_{no}}}_1 \nl
   \frac{1}{2}\norm{\sum_{ i \in S_{yes},j\in S_{no}} e_{ij} \otimes (A_i^{1/2} A_j^{1/2})}_1   \nnl
   \leq \frac{1}{2} \sum_{i,j}\norm{A_i^{1/2} A_j^{1/2}}_1 =\frac{1}{2} \sum_{i,j} F(A_i,A_j).
   \ea

The last inequality is just the triangle inequality for the trace norm. Now note that 
$$F(A_i,A_j) =\sqrt{p_i p_j} F(\sigma_i,\sigma_j) \leq F(\sigma_i,\sigma_j)$$ 
and
$$ \tr A_i A_0^{-1/2} A_j A_0^{-1/2} = p_i \Pr\L( PGM(\sigma_i)=j\R ) = p_j \Pr( PGM\L(\sigma_j)=i \R),$$
we have the stated bound on birchromatic state discrimination error probability.

%

%
%
%
%
%
%
%
%
%
%
%
%
%
%
%
%
%
%
%
%
%
%
%

\end{document}